\documentclass[11pt,a4paper]{article}

\usepackage[T1]{fontenc}
\usepackage[utf8]{inputenc}

\usepackage{mathpazo}          
\usepackage[scaled=0.92]{helvet}
\usepackage{microtype}         
\usepackage[a4paper,left=3cm,right=3cm,top=2.7cm,bottom=2.9cm]{geometry}

\usepackage{amsmath,amssymb,amsthm}
\usepackage{mathtools}
\usepackage{enumerate}
\usepackage{tikz}

\usepackage{graphicx}
\usepackage[font=small,labelfont=bf,labelsep=period]{caption}
\usepackage{subcaption}
\usepackage{placeins}          

\usepackage{xcolor}
\usepackage{titlesec}
\usepackage{titling}
\usepackage{authblk}
\usepackage{abstract}
\usepackage{fancyhdr}
\usepackage[numbers,sort&compress]{natbib}

\definecolor{linkcol}{RGB}{20,60,120}
\definecolor{citecol}{RGB}{20,100,70}
\usepackage[
  colorlinks=true,
  linkcolor=linkcol,
  citecolor=citecol,
  urlcolor=linkcol,
  breaklinks=true,
  pdftitle={Incremental Stability and Convergence Properties of Discrete-Time Projected Control Systems},
  pdfauthor={Riccardo Bertollo, S.J.A.M. van den Eijnden, W.P.M.H. Heemels}
]{hyperref}

\titleformat{\section}[hang]
  {\normalfont\large\bfseries}{\thesection}{0.75em}{}
\titleformat{\subsection}[hang]
  {\normalfont\normalsize\bfseries}{\thesubsection}{0.65em}{}
\titleformat{\paragraph}[runin]
  {\normalfont\bfseries}{}{0pt}{}[.]
\titlespacing*{\section}{0pt}{1.7\baselineskip}{0.7\baselineskip}
\titlespacing*{\subsection}{0pt}{1.2\baselineskip}{0.5\baselineskip}
\titlespacing*{\paragraph}{0pt}{0.8\baselineskip}{0.6em}

\fancypagestyle{plain}{\fancyhf{}%
  \fancyfoot[C]{\footnotesize\thepage}}

\pretitle{\begin{center}\LARGE\bfseries}
\posttitle{\par\end{center}\vskip 0.3em}
\preauthor{\begin{center}\large}
\postauthor{\par\end{center}}
\predate{}\postdate{}

\newenvironment{keywords}
  {\vskip 0.4em\begin{quotation}\noindent\small\textbf{Keywords:}\ \ignorespaces}
  {\end{quotation}\vskip 0.4em}

\theoremstyle{plain}
\newtheorem{thm}{Theorem}[section]
\newtheorem{prop}[thm]{Proposition}
\newtheorem{lemma}[thm]{Lemma}
\newtheorem{corollary}[thm]{Corollary}

\theoremstyle{definition}
\newtheorem{definition}[thm]{Definition}
\newtheorem{assumption}[thm]{Assumption}

\theoremstyle{remark}
\newtheorem{remark}[thm]{Remark}

\newcounter{example}[section]
\renewcommand{\theexample}{\thesection.\arabic{example}}
\newenvironment{example}[1]
  {\par\medskip\refstepcounter{example}%
   \noindent\textbf{Example \theexample\ (#1).}\enspace\ignorespaces}
  {\hfill $\circ$\par\medskip}

\newcommand{\proofstep}[1]{\par\smallskip\noindent\textit{#1}\par\nobreak\smallskip}

\newenvironment{proofof}[1]{\par\medskip\noindent {\it Proof of #1:}}{\hfill $\square$\par\medskip}

\newcommand{\K}{\ensuremath{\mathcal{K}}}
\renewcommand{\L}{\ensuremath{\mathcal{L}}}
\renewcommand{\P}{\ensuremath{\mathcal{P}}}
\renewcommand{\S}{\ensuremath{\mathcal{S}}}
\newcommand{\real}{\ensuremath{\mathbb{R}}}

\newcommand{\nat}{\ensuremath{\mathbb{N}}}

\DeclarePairedDelimiter\floor{\lfloor}{\rfloor}

\allowdisplaybreaks[1]

\title{Incremental Stability and Convergence Properties of\\
       Discrete-Time Projected Control Systems%
       \thanks{Funded by the European Union (ERC Advanced Grant, Proacthis, no.\ 10105538).}}

\author[1,2]{Riccardo Bertollo}
\author[1]{S.J.A.M. van den Eijnden}
\author[1]{W.P.M.H. (Maurice) Heemels}
\affil[1]{Control Systems Technology Section, Department of Mechanical Engineering, Eindhoven University of Technology, The Netherlands}
\affil[2]{ICTEAM, Université catholique de Louvain, Belgium}
\date{}

\begin{document}
    \maketitle
    \begin{abstract}
    Projection-based controllers can overcome fundamental limitations of classical linear time-invariant control by modifying the controller's input-output behavior via projection. A key example is given by the hybrid integrator-gain system, a projected integrator, which has recently found successful application in several industrial systems. While prior work on analysis and design of projection-based control systems has primarily focused on the continuous-time setting and non-incremental analysis, a more refined incremental analysis in discrete-time is needed to better reflect actual digital implementation and obtain more accurate (robust) performance assessment. To address this need, this paper considers incremental stability and convergence analysis of discrete-time projection-based control systems. Our first methodology is based on showing that such controllers preserve the quadratic incremental stability of their nominal (unprojected) dynamics, if the projection metric is well-designed. Building on this, we derive a small-gain condition guaranteeing incremental input-to-state stability for interconnections of projected controllers with general nonlinear plants. A second approach is grounded in a direct Lyapunov-based method for verifying incremental stability in input-affine piecewise-smooth systems, which can be seen as an extension of the classical discrete-time Demidovi\v{c} conditions. We illustrate our results through several examples, and demonstrate performance quantification via nonlinear Bode plots, with a special focus on first-order projection elements.
    \end{abstract}

    \begin{keywords}
    Projection-based control; incremental stability; convergence; discrete-time systems;
    hybrid integrator-gain system; small-gain theorem; linear matrix inequalities.
    \end{keywords}

    \hrule
    \vskip 1.5\baselineskip

    \section{Introduction}
    Projection-based controllers hold the potential to overcome fundamental performance limitations of classical linear time-invariant (LTI) control \cite{Deenen2021Projection-Based, vdEijnden26HIGS}. The intuition underlying projection-based controllers is to keep the input-output pair of the controller within carefully designed sets by means of projection. A prominent example is given by the hybrid integrator-gain system (HIGS), which aims at addressing the negative effects of phase lag in LTI integrators by keeping the sign of the integrator's output equivalent to that of its input by way of projecting the output onto a sector set \cite{Deenen2021Projection-Based, Shi2022ANegative}. The potential of projection-based control in general, and HIGS in specific, has been demonstrated in numerous industrial applications, including wafer scanners \cite{vdEijnden20, Heertjes2019Hybrid}, micro-electromechanical systems \cite{Shi2022ANegative}, and medical devices \cite{Beij24}. Other notable approaches that modify properties of LTI controllers by adding a nonlinear or hybrid mechanism, similar in spirit to projection, include sliding-mode control \cite{Perozzi2022Upgrading}, switching control \cite{Lau03}, and reset control \cite{Zaccarian2005First, Aangenent}, to name but a few.
    
    Over the past years, many constructive and practical tools for stability and performance analysis of projection-based control systems have been developed, see, e.g., \cite{Heemels2020Oblique, Shi2022ANegative, Chu23, vdEijndenAut24} and the references therein. Most of these works focus on continuous-time representations. However, any practical control implementation typically involves digital hardware interacting with physical systems, which makes a sampled-data description more desirable. Several recent works consider the discrete-time and sampled-data setting \cite{Sharif2022ADiscrete-Time, Shi25_digital}, but the results are oriented towards non-incremental stability and performance properties such as input-to-state stability (ISS) and $\mathcal{L}_2$-gains. Despite being important, these latter notions provide a worst-case view on robust stability and performance of projection-based control systems, and may be conservative in that respect. 

   To better quantify robustness and performance, the notions of \emph{incremental} (input-to-state) stability \cite{Angeli2002ALyapunov} and \emph{convergence} \cite{Pavlov2005Uniform} have proven useful. Incrementally stable systems are inherently robust in the sense that small perturbations in initial conditions or inputs only lead to small changes in states and outputs. Moreover, incrementally stable and convergent systems asymptotically forget the effect of initial conditions over time, and admit a steady-state response that solely depends on the input. This opens up possibilities for accurately characterizing system performance through LTI-inspired tools such as nonlinear frequency-response functions \cite{PavlovFRF}. While incremental stability and convergence have been studied in various discrete-time settings \cite{Tran2016Incremental, Tran2018Convergence, Jungers2024DiscreteTime, Sharif2024Analysis}, application to general discrete-time projection-based control systems remains limited.

    In this paper, we are, therefore, concerned with incremental stability and convergence analysis of discrete-time projection-based control systems. Building on our preliminary conference results in \cite{BertolloADHS24}, we make three main contributions. First, we show that discrete-time projection-based controllers preserve \emph{quadratic incremental stability (QIS) properties} of their underlying nominal (unprojected) dynamics, if the projection metric is designed appropriately. This result can be seen as a discrete-time equivalent of \cite[Theorem 2]{Heemels2020Oblique}, and plays a key role in our second contribution: the development of a small-gain condition that guarantees incremental input-to-state stability ($\delta$ISS) for interconnections of a projection element and a general nonlinear system. The QIS-preservation and small-gain result combined provide our first methodology to guarantee $\delta$ISS of projection-based closed-loop control systems.
    In our third contribution, we specialize to the class of input affine (nonlinear) systems, allowing us to develop a second methodology to establish $\delta$ISS of a piecewise-smooth closed-loop system (resulting, for instance, from a projection-based controller interconnected with a smooth plant), using an incremental Lyapunov function. The result is based on exploiting the continuity of the projection operator under appropriate assumptions with guarantees for the existence of a $\delta$ISS Lyapunov function, and can be seen as an extension of the discrete-time Demidovi\v{c} conditions \cite[Theorem 14]{Tran2018Convergence} to the case of input-affine piecewise-smooth systems with continuous dynamics. For interconnections with linear systems, these conditions translate into numerically tractable linear matrix inequalities (LMIs). In both our methodologies, we show that the $\delta$ISS property implies the stronger property of convergence, thereby guaranteeing several desirable characteristics such as the fact that periodic inputs lead to periodic outputs having the same fundamental period as the inputs \cite{Pavlov2005Uniform}. We demonstrate applicability of the two different methodologies on both a nonlinear and linear example, and highlight the relevance for performance analysis by constructing ``nonlinear'' Bode-plots.

    This paper extends the result of the preliminary conference version \cite{BertolloADHS24} in several essential ways. We extend the QIS-preservation and small-gain approach to general projection-based controllers, beyond First-Order Projection Elements (FOPEs), requiring new insights for the preservation of QIS. We further extend the results with a new Lyapunov-based methodology, and provide full proofs which were not available previously in \cite{BertolloADHS24}. Throughout various parts of the paper, we relate our results to the special case of FOPEs, which was the object of study in our preliminary work \cite{BertolloADHS24}, and which are of particular interest due to their close connection with HIGS. 

    The remainder of this paper is organized as follows.
    In Section~\ref{sec:incremental-convergence}, we summarize the main definitions of incremental stability and convergence properties.
    In Section~\ref{sec:dynamics}, we define the dynamics of discrete-time projected controllers, and precisely describe the problem setting under consideration: certifying global uniform convergence of discrete-time nonlinear systems including projected controllers.
    In Section~\ref{sec:inheritance-SG}, we identify conditions under which a projected controller inherits incremental properties of its nominal (unprojected) dynamics, and use these conditions to certify the convergence of the closed loop through an incremental small-gain theorem.
    In Section~\ref{sec:PW-dynamics}, we consider an explicit representation of the closed-loop system through switched, piecewise-smooth and continuous dynamics, and certify incremental stability through a quadratic incremental Lyapunov function.
    We conclude with two numerical examples in Section~\ref{sec:simulation}.

    \noindent
    {\bf Notation:} We denote by $\real_{>0} := (0, \infty)$ the set of positive real numbers.
    Given $N \in \nat$, we denote by $\nat_{\geq N}$ the set of natural numbers greater than or equal to $N$.
    $\mathbb B$ is the closed unit ball of appropriate dimension (clear from the context).
    We denote by $\floor{x} := \max \{j \in \nat : j \leq x\}$ the floor operator.
    
    Given a positive definite matrix $P \in \real^{n \times n}$, we denote by $\lambda_m(P), \lambda_M(P)$ its minimum and maximum eigenvalue, respectively; we also denote by $\kappa(P) := \lambda_M(P)/\lambda_m(P)$ its condition number. 
    Given two vectors $x,y \in \real^n$, we denote by $\langle x,y \rangle_P := x^\top P y$.
    We denote by $\|x\|_P := \sqrt{\langle x, x \rangle_P}$ the $P$-norm of $x$.

    Given a set $X \subset \real^n$, we denote by $\partial X$ its boundary and by $\mathrm{int}(X)$ its interior.
    Given a point $y \in \real^n$, we denote by $d_P(y,X) := \inf_{x \in X} \|x-y\|_P$ the weighted distance of $y$ to $X$; if $P$ is not specified, $d(y,X)$ denotes the Euclidean distance of $y$ to $X$.
    Given two sets $X,Y \subset \real^n$, we denote by $d_H(X,Y) := \max \left\{ \sup_{x \in X} d(x,Y), \sup_{y \in Y} d(y,X) \right\}$ the Hausdorff distance between $X$ and $Y$.
    We call ``$[k_1, k_2]$-sector'' the subset of $\real^2$ defined as $\left\{ (x, y) \in \real^2 : k_1 x^2 \leq x y \leq k_2 x^2 \right\}$. We denote by $\overline{xy} \subset \real^n$, called a segment, the convex hull of two points $x,y \in \real^n$, i.e., $\overline{xy} := \{z \in \real^n : z = \lambda x + (1-\lambda) y, \lambda \in [0,1]\}$.
    
    Given a signal $v : \nat \rightarrow \real^n$, we denote $\|v\|_{[j_1, j_2]} := \sup_{j \in \{j_1, \ldots, j_2\}} |v(j)|$, and by $\|v\|_\infty := \sup_{j \in \nat} |v(j)|$ its infinity norm.
    We say that a function $\alpha: \real_{\geq 0} \to \real_{\geq 0}$ is of class $\K$, denoted $\alpha \in \K$, if it is continuous, zero at zero, and strictly increasing.
    We say that $\alpha \in \K_\infty$ if, additionally, $\alpha(s) \to \infty$ when $s \to \infty$.
    We say that a function $\beta$ is of class $\K\L$, denoted by $\beta \in \K\L$, if for any fixed $s$ the function $r \mapsto \beta(r,s)$ is of class $\K$, and for any fixed $r$ the function $s \mapsto \beta(r,s)$ is non-increasing and tends to zero at infinity.
    We denote by $\mathrm{Id}(s) = s$ the identity function.
    We denote the image of a map $F: \real^n \rightrightarrows \real^m$ by $\mathrm{Im}(F) := \{y \in \real^m: y \in F(x) \mbox{ for some } x \in \real^n \}$.

    To lighten the notation, whenever it is clear from the context, we drop the dependence of a discrete-time signal $z$ on the time $j \in \nat$, denoting the current value $z(j)$ and its previous value $z(j-1)$ as $z$ and $z^-$, respectively.

    \section{Preliminaries: Incremental stability and convergence}
    \label{sec:incremental-convergence}

    Before proceeding with the detailed problem statement and technical derivations, we recall here the adopted definitions of incremental stability and convergence properties, as well as the connections between these two related, but not equivalent, concepts.
    We give these definitions for a general system of the form
    \begin{align}
        \label{eq:gen_sys}
        \begin{split}
            x &= F(x^-,w^-), \\
            y &= H(x),
        \end{split}
    \end{align}
    where $x \in \real^{n_x}$ is the state variable and $w \in \real^{n_w}, y \in \real^{n_y}$ are the external input and output, respectively.
    We denote by $x(j,\xi,w)$ the solution to \eqref{eq:gen_sys} at time $j$, evolving from initial condition $x(0) = \xi$ under the input $w: \nat \to \real^{n_w}$.
    Similarly, we denote by $y(j,\xi,w)$ the output trajectory corresponding to $x(j,\xi,w)$.

    \begin{definition}
        \label{def:GAS/GES}
        The system \eqref{eq:gen_sys} is {\it incrementally uniformly globally asymptotically stable} ($\delta$UGAS), if there exists a $\beta \in \K\L$ such that, for all bounded inputs $w: \nat \rightarrow \real^{n_w}$ and all pairs of initial conditions $\xi_1, \xi_2 \in \real^{n_x}$, it holds that
        \begin{align}
            \label{eq:delta_GES}
            | x(j,\xi_1,w) - x(j,\xi_2,w) | \leq \beta(|\xi_1-\xi_2|, j),
        \end{align}
        for all $j \in \nat$.
    \end{definition}

    The definition of $\delta$UGAS can be extended to other well-known stability concepts, such as incremental input-to-state stability (both definitions can be found in \cite{Tran2016Incremental}).

    \begin{definition}
        \label{def:ISS}
        The system \eqref{eq:gen_sys} is {\it incrementally input-to-state stable} ($\delta$ISS), if there exist $\beta \in \K\L$ and $\gamma \in \K$ such that, for all pairs of initial conditions $\xi_1,\xi_2 \in \real^{n_x}$ and all bounded input signals $w_1, w_2: \nat \rightarrow \real^{n_w}$, it holds that
        \begin{align}
            \label{eq:dISS}
            | x(j,\xi_1,w_1) - x(j,\xi_2,w_2) | \leq \beta \left(|\xi_1-\xi_2|, j \right) + \gamma \left(\| w_1 - w_2 \|_\infty\right),
        \end{align}
        for all $j \in \nat$.
        If $\gamma(s) = \bar \gamma s$, for some $\bar \gamma \in \real_{>0}$, the system \eqref{eq:gen_sys} is $\delta$ISS with {\it linear gain} $\bar \gamma$.
    \end{definition}

    Incremental properties, such as $\delta$UGAS and $\delta$ISS, can also be certified in terms of incremental (ISS) Lyapunov functions.
    \begin{definition}
        A function $V: \real^{n_x} \times \real^{n_x} \rightarrow \real$ is called an incremental Lyapunov function for system \eqref{eq:gen_sys}, if there exist $\rho_1, \rho_2, \alpha \in \K_\infty$ such that, for all $x_1, x_2 \in \real^{n_x}$ and all $w \in \real^{n_w}$, it holds that
        \begin{align}
            \label{eq:dISS-Lyapunov}
            \begin{split}
                &\rho_1(|x_1 - x_2|) \leq V(x_1, x_2) \leq \rho_2(|x_1-x_2|), \\
            &V(F(x_1, w), F(x_2, w)) - V(x_1, x_2) \leq -\alpha(V(x_1, x_2)).
            \end{split}
        \end{align}
        Additionally, if there exists $\sigma \in \K_\infty$ such that, for all $x_1, x_2 \in \real^{n_x}$ and all $w_1, w_2 \in \real^{n_w}$, it holds that
        \begin{align}
            \begin{split}
                V(F(x_1, w_1), F(x_2, w_2)) - V(x_1, x_2) \leq -\alpha &(V(x_1, x_2)) \\ &+ \sigma(|w_1 - w_2|),
            \end{split}
        \end{align}
        then $V$ is called a $\delta$ISS Lyapunov function for system \eqref{eq:gen_sys}.
    \end{definition}

    If $F$ is continuous and zero at zero, the existence of an incremental Lyapunov function for system \eqref{eq:gen_sys} implies that the system is $\delta$UGAS \cite[Theorem 5]{Tran2016Incremental}, and the existence of a $\delta$ISS Lyapunov function implies that the system is $\delta$ISS \cite[Theorem 8]{Tran2016Incremental}.
    
    Incremental stability can also be defined in the input-output sense, as follows:
    \begin{definition}
        \label{def:IOS}
        The system \eqref{eq:gen_sys} is {\it incrementally input-to-output stable} ($\delta$IOS), if there exist $\beta \in \K\L$ and $\gamma \in \K$ such that, for all pairs of initial conditions $\xi_1,\xi_2 \in \real^{n_x}$ and all bounded input signals $w_1, w_2: \nat \rightarrow \real^{n_w}$, it holds that
        \begin{align}
            \label{eq:dIOS}
            | y(j,\xi_1,w_1) - y(j,\xi_2,w_2) | \leq \beta \left(|\xi_1-\xi_2|, j \right) + \gamma \left(\| w_1 - w_2 \|_\infty\right),
        \end{align}
        for all $j \in \nat$.
        If $\gamma(s) = \bar \gamma s$, for some $\bar \gamma \in \real_{>0}$, the system \eqref{eq:gen_sys} is $\delta$IOS with {\it linear gain} $\bar \gamma$.
    \end{definition}

    Definitions~\ref{def:ISS} and \ref{def:IOS} are strongly related and, if $H$ is globally Lipschitz, they are equivalent under the additional assumption of \textit{incremental detectability} (see Proposition~\ref{prop:dISS-dIOS} below).
    \begin{definition}
        \label{def:SUO}
        The system \eqref{eq:gen_sys} is {\it incrementally detectable}, if there exist $\beta \in \K\L$ and $\gamma \in \K$ such that, for all pairs of initial conditions $\xi_1,\xi_2 \in \real^{n_x}$ and all bounded input signals $w_1, w_2: \nat \rightarrow \real^{n_w}$, it holds that
        \begin{align}
            \label{eq:dDet}
            |x(j,\xi_1,w_1) - x(j,\xi_2,w_2)| &\leq \beta(|\xi_1-\xi_2|, j) \\
            &\hspace{20pt} + \gamma \left( \|(w_1-w_2,y_1-y_2)\|_\infty \right), \nonumber
        \end{align}
        for all $j \in \nat$.
    \end{definition}
    With the definition of incremental detectability, we can go from $ \delta$IOS to $\delta$ISS, thanks to the following result, which is a discrete-time incremental equivalent of \cite[Proposition 3.1]{Jiang1994Small-gain}. The proof can be found in Appendix~\ref{app:proof-technical}.
    \begin{prop}
        \label{prop:dISS-dIOS}
        Consider system \eqref{eq:gen_sys} with $H$ globally Lipschitz continuous.
        The system \eqref{eq:gen_sys} is $\delta$ISS if and only if it is $\delta$IOS and incrementally detectable.
    \end{prop}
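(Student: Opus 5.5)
The "only if" direction is direct. Assume \eqref{eq:gen_sys} is $\delta$ISS with $\beta\in\K\L$, $\gamma\in\K$, and let $L$ be a global Lipschitz constant of $H$. Then
\begin{align*}
|y(j,\xi_1,w_1)-y(j,\xi_2,w_2)| \leq L\,|x(j,\xi_1,w_1)-x(j,\xi_2,w_2)| \leq L\beta(|\xi_1-\xi_2|,j)+L\gamma(\|w_1-w_2\|_\infty),
\end{align*}
which is $\delta$IOS with the comparison functions $L\beta$ and $L\gamma$. Incremental detectability follows immediately as well: the $\delta$ISS bound \eqref{eq:dISS} already has the form \eqref{eq:dDet}, because $\|w_1-w_2\|_\infty \leq \|(w_1-w_2,y_1-y_2)\|_\infty$ and $\gamma$ is increasing.

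For the "if" direction, I would substitute the $\delta$IOS estimate into the detectability estimate. Let $\beta_o,\gamma_o$ be the $\delta$IOS functions and $\beta_d,\gamma_d$ the detectability functions, and write $r:=|\xi_1-\xi_2|$ and $\omega:=\|w_1-w_2\|_\infty$. For every $k$, $\delta$IOS gives $|y_1(k)-y_2(k)|\leq \beta_o(r,0)+\gamma_o(\omega)$. The supremum of the output increment is therefore bounded by this quantity, and so is $\|(w_1-w_2,y_1-y_2)\|_\infty$ once the identity is added, up to a norm-equivalence constant. Using $\gamma_d(a+b)\leq\gamma_d(2a)+\gamma_d(2b)$ then yields
\begin{align*}
|x_1(j)-x_2(j)|\leq \beta_d(r,j)+\gamma_d\big(2\beta_o(r,0)\big)+\gamma_d\big(2(\omega+\gamma_o(\omega))\big).
\end{align*}
The last term is a class-$\K$ gain in $\omega$, as required. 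The middle term, however, does not decay in $j$. This is the main obstacle.

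To remove the non-decaying term I would use causality and time invariance, following the standard argument of \cite{Jiang1994Small-gain}. Fix $j$ and set $m=\floor{j/2}$. I would restart the system at time $m$ from the states $x_i(m)$, using the shifted inputs. Detectability over the window $[m,j]$ only involves the output increment on that window, and there $\delta$IOS gives $|y_1(k)-y_2(k)|\leq\beta_o(r,m)+\gamma_o(\omega)$. Hence
\begin{align*}
|x_1(j)-x_2(j)|\leq \beta_d\big(|x_1(m)-x_2(m)|,j-m\big)+\gamma_d\big(2\beta_o(r,\floor{j/2})\big)+\gamma_d\big(2(\omega+\gamma_o(\omega))\big).
\end{align*}
I then bound $|x_1(m)-x_2(m)|$ by the uniform estimate derived above, which has the form $c(r)+\tilde\gamma(\omega)$ with $c\in\K$. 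Splitting $\beta_d(a+b,t)\leq\beta_d(2a,t)+\beta_d(2b,0)$ gives the final bound
\begin{align*}
\tilde\beta(r,j):=\beta_d\big(2c(r),j-\floor{j/2}\big)+\gamma_d\big(2\beta_o(r,\floor{j/2})\big),
\end{align*}
which is of class $\K\L$. Together with the combined input gain $\beta_d(2\tilde\gamma(\omega),0)+\gamma_d(2(\omega+\gamma_o(\omega)))$, which is of class $\K$, this proves $\delta$ISS.

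One technical point needs care. Detectability is formulated with the sup-norm over all times $\nat$, so applying it on a window requires that the system be time invariant: a solution started at time $m$ from $x_i(m)$ with the shifted inputs is again a solution of \eqref{eq:gen_sys}. This holds because $F$ and $H$ do not depend on time. The remaining bookkeeping consists of making the floors consistent for small $j$ and checking that each summand is of class $\K$ or $\K\L$.
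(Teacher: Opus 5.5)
Your proposal is correct and follows the paper's proof essentially step by step. The paper also proves ``only if'' through the Lipschitz constant of $H$. For ``if'', it likewise first obtains a uniform bound on $\|\delta x\|_\infty$ by substituting $\delta$IOS into detectability, then restarts at $\floor{j/2}$ by time invariance and bounds the later output increments by $\beta_o(r,\floor{j/2})+\gamma_o(\omega)$. Your inequality $\gamma(a+b)\le\gamma(2a)+\gamma(2b)$ is the $\rho=\mathrm{Id}$ case of the weak triangle inequality used there.

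One minor point: you do not need to restrict detectability to the window $[m,j]$, which the sup-norm form of Definition~\ref{def:SUO} does not directly give. Applied to the restarted solutions, it involves $\|\delta y\|_{[m,\infty)}$, and your $\delta$IOS estimate already bounds that whole tail.
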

    Lastly, we define the concept of (time-independent) uniform global convergence \cite[Def. 2]{Pavlov2012Steady-State}
    \begin{definition}
        \label{def:convergent}
        The system \eqref{eq:gen_sys} is {\it uniformly globally convergent} if, for every bounded input $w: \mathbb Z \rightarrow \real^{n_w}$,
        \begin{enumerate}
            \item there exists a unique bounded solution $\bar x_w(j)$ to \eqref{eq:gen_sys} that is defined for all $j \in \mathbb Z$
            \item there exists $\beta \in \K\L$ such that, for all $\xi \in \real^{n_x}$, it holds that
            \begin{align}
                \label{eq:convergent}
                |x(j, \xi, w) - \bar x_w(j)| \leq \beta(|\xi - \bar x_w(0)|, j),
            \end{align}
            for all $j \in \nat$.
        \end{enumerate}
    \end{definition}
    From Definition~\ref{def:convergent}, it follows that, for a given bounded input $w$, the steady-state solution to a convergent system is unique.
    Moreover, if $w$ is periodic, the steady-state solution is also periodic with the same period \cite{Pavlov2012Steady-State}.
    This is an important property in the context of control systems.
    Indeed, by showing that a control system is convergent, we enable the use of {\it nonlinear frequency response functions} \cite[Section 3]{Pavlov2012Steady-State}, which can be used to characterize the steady-state response to all harmonic excitations.
    In the homogeneous case, one can also plot these functions against the excitation frequency, obtaining nonlinear Bode-like plots, which can be useful for the design and the performance characterization of the controller (see, e.g., \cite{Heertjes2019Hybrid} for an example in continuous time and Section~\ref{sec:simulation} below for an example in discrete time).

    Incremental stability and convergence are related concepts, although not necessarily equivalent, as extensively investigated in \cite{Tran2018Convergence}.
    However, we have the following result connecting the two concepts, which can be found in \cite[Thm. 13]{Tran2018Convergence}
    \begin{prop}
        \label{prop:incremental-implies-contractive}
        {\bf \cite[Thm. 13]{Tran2018Convergence}}
        If the system \eqref{eq:gen_sys} is $\delta$UGAS, $F$ is continuous and for any bounded input $w$ there exists a compact forward invariant set for \eqref{eq:gen_sys}, then the system \eqref{eq:gen_sys} is uniformly globally convergent.
    \end{prop}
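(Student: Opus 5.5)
The plan is to build the bounded solution on $\mathbb Z$ as a limit of solutions started further and further in the past, all kept inside the compact forward invariant set. Fix a bounded input $w:\mathbb Z\to\real^{n_w}$ and let $\mathcal X_w$ be the compact forward invariant set from the hypothesis. I will read forward invariance as: if $x(j_0)\in\mathcal X_w$ at some initial time $j_0\in\mathbb Z$, then the solution driven by $w$ satisfies $x(j)\in\mathcal X_w$ for all $j\ge j_0$. Since \eqref{eq:gen_sys} is time invariant, this is the same as invariance under every time shift of $w$. Let $D:=\max_{a,b\in\mathcal X_w}|a-b|<\infty$ be its diameter, and fix a point $\xi_0\in\mathcal X_w$.

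\emph{Construction of $\bar x_w$.} For each $k\in\nat$, let $x_k$ be the solution with $x_k(-k)=\xi_0$ driven by $w$, defined for $j\ge -k$, so that $x_k(j)\in\mathcal X_w$ for all $j\ge -k$. Take $m>k$ and compare $x_k$ and $x_m$ from time $-k$ onward. Both are driven by the same input, and the shifted input $w(\cdot-k)$ is still bounded. Since the $\beta$ in Definition~\ref{def:GAS/GES} is uniform over all bounded inputs, $\delta$UGAS gives
\begin{align*}
|x_k(j)-x_m(j)|\le\beta(|\xi_0-x_m(-k)|,j+k)\le\beta(D,j+k),\qquad j\ge -k .
\end{align*}
Hence, for each fixed $j$, the sequence $(x_k(j))_{k\ge -j}$ is Cauchy, and I define $\bar x_w(j)$ as its limit. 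Because $\mathcal X_w$ is closed, $\bar x_w(j)\in\mathcal X_w$, so $\bar x_w$ is bounded. Letting $k\to\infty$ in $x_k(j)=F(x_k(j-1),w(j-1))$ and using continuity of $F$ shows that $\bar x_w$ solves \eqref{eq:gen_sys} on all of $\mathbb Z$.

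\emph{Uniqueness.} Let $\tilde x$ be another solution defined on $\mathbb Z$ with $|\tilde x(j)|\le M$ for all $j$. For any $j$ and any $k$ with $-k\le j$, apply $\delta$UGAS from the initial time $-k$:
\begin{align*}
|\tilde x(j)-\bar x_w(j)|\le\beta(M+\max_{a\in\mathcal X_w}|a|,\,j+k).
\end{align*}
Letting $k\to\infty$ gives $\tilde x=\bar x_w$.

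\emph{Attractivity (item 2).} This follows directly from $\delta$UGAS applied on $\nat$ to the pair of initial conditions $\xi$ and $\bar x_w(0)$ under the same input. That yields \eqref{eq:convergent}, with the same $\beta$ for every bounded $w$.

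\emph{Main obstacle.} The delicate point is the time-shift and uniformity bookkeeping. Definition~\ref{def:GAS/GES} is stated with initial time $0$ and inputs on $\nat$, so I must check two things. First, that the shifted inputs $w(\cdot-k)$ are admissible bounded inputs; this holds because $w$ is bounded on $\mathbb Z$. Second, that $\beta$ does not depend on the input, which is exactly what allows the same $\beta(D,\cdot)$ bound for every $k$. I also have to make sure the forward invariant set is invariant for every starting time, not just for $j_0=0$. If the hypothesis only gave invariance from time $0$, I would first try to argue via time invariance that the set can be taken to serve all shifted inputs. If that fails, I would replace $\mathcal X_w$ by the union of the sets for the shifts and control its diameter.
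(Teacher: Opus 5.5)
Your proof is correct. The paper does not prove this proposition; it only imports it from \cite[Thm.~13]{Tran2018Convergence}. Your argument is therefore a self-contained substitute rather than a variant of something in the text.

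\emph{How your route works.} You obtain the bounded solution on $\mathbb Z$ directly from $\delta$UGAS. The $\beta$ in Definition~\ref{def:GAS/GES} is uniform over all bounded inputs, and hence over all time shifts of $w$. So the backward-initialized solutions $x_k$ form a Cauchy sequence at each fixed time. The invariant set enters only to bound the initial mismatch by its diameter $D$; boundedness is really all you use. Continuity of $F$ is needed only to show the limit solves the recursion. The same uniform estimate then gives uniqueness and the attractivity bound with an input-independent $\beta$.

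\emph{A classical alternative.} In the spirit of Yakubovich's lemma for convergent systems, one can instead first extract a solution on $\mathbb Z$ lying in the compact set. This uses Bolzano--Weierstrass, a diagonal subsequence argument, and only continuity plus invariance. Incremental stability is then invoked solely for uniqueness and attractivity. That route separates existence from stability. Yours avoids subsequence extraction, since it shows the whole sequence converges, but it relies essentially on the uniformity of $\beta$ in the input.

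\emph{On the point you flag.} Reading forward invariance as invariance from every initial time $j_0\in\mathbb Z$ is the right interpretation for the nonautonomous system driven by a fixed $w$ on $\mathbb Z$. It is also exactly what the paper supplies where the proposition is used. In the proof of Corollary~\ref{cor:dISS-convergence}, the sublevel set depends only on the bound $M(w)$, so it is invariant under every shift of $w$. Your fallback of taking the union of the sets for the shifted inputs would not go through as stated. Under the weaker reading, nothing controls the diameter of that union. You do not need it, though.
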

    Using Proposition~\ref{prop:incremental-implies-contractive}, we can show that global uniform convergence is directly implied by $\delta$ISS and some regularity conditions on $F$, as formalized in the next corollary (the proof can be found in Appendix~\ref{app:proof-technical}).
    \begin{corollary}
        \label{cor:dISS-convergence}
        If $F$ in \eqref{eq:gen_sys} is continuous, $F(0,0) = 0$, and \eqref{eq:gen_sys} is $\delta$ISS, then system \eqref{eq:gen_sys} is also uniformly globally convergent.
    \end{corollary}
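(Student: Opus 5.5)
The plan is to verify the three hypotheses of Proposition~\ref{prop:incremental-implies-contractive}: $\delta$UGAS, continuity of $F$, and existence of a compact forward invariant set for every bounded input. Continuity of $F$ is assumed. For $\delta$UGAS, take $w_1 = w_2 = w$ in \eqref{eq:dISS}. Then $\|w_1-w_2\|_\infty = 0$, and since $\gamma(0)=0$ we obtain exactly \eqref{eq:delta_GES} with the same $\beta \in \K\L$, uniformly in the bounded input $w$.

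The substantive step is the compact forward invariant set. Here I will compare with the trivial solution. Since $F(0,0)=0$, the solution from $\xi_2 = 0$ under $w_2 \equiv 0$ is $x(j,0,0) \equiv 0$. Applying \eqref{eq:dISS} with $w_1 = w$ and $w_2 = 0$ gives
\begin{align*}
|x(j,\xi,w)| \leq \beta(|\xi|,0) + \gamma(\|w\|_\infty) \quad \text{for all } j \in \nat .
\end{align*}
So every trajectory is bounded, uniformly in $j$. A ball $r\mathbb B$ itself need not be forward invariant, because $\beta(r,0) \geq r$ in general. I would therefore define the set as the reachable set from a ball:
\begin{align*}
\Omega := \operatorname{cl}\,\{ x(j,\xi,w') : j \in \nat,\ |\xi| \leq r,\ w' \text{ a time shift of } w \}.
\end{align*}
Time shifts are included so that invariance holds for the time-varying input. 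Each shift $w'$ satisfies $\|w'\|_\infty \leq \|w\|_\infty$. By the bound above, $\Omega \subset (\beta(r,0)+\gamma(\|w\|_\infty))\mathbb B$, so $\Omega$ is compact. The set without closure is invariant by construction, since advancing one step along a shifted input lands in the set again.

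Alternatively, with a fixed input, I would interpret forward invariance in the sense of Tran et al.\ as a set $\Omega$ such that $x(j,\xi,w) \in \Omega$ for all $\xi \in \Omega$ and all $j$. In that sense one can take $\Omega := \operatorname{cl}\bigcup_{j} \{x(j,\xi,w): |\xi|\leq r\}$. This needs a semigroup/time-shift argument: a point reached at time $k$, propagated from time $0$ with $w$, corresponds to propagating with the shifted input. That mismatch is why I include shifts as above.

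The main obstacle is preserving invariance after taking the closure. I will handle it through continuity of $F$ in both arguments: if $x_n \to x$ with $x_n \in \Omega$, then $F(x_n,w(k)) \to F(x,w(k))$, and the limit lies in the closed set $\Omega$. With $\delta$UGAS, continuity of $F$, and this compact forward invariant set established, Proposition~\ref{prop:incremental-implies-contractive} yields uniform global convergence.
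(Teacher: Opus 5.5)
\paragraph{The gap: $\Omega$ is not forward invariant}
Your $\delta$UGAS step is correct, and so is the uniform bound $|x(j,\xi,w)|\le\beta(|\xi|,0)+\gamma(\|w\|_\infty)$, which you get by comparing with the zero solution as the paper also does. The problem is the forward invariance of your set $\Omega$. Proposition~\ref{prop:incremental-implies-contractive} needs a compact set that solutions cannot leave from any of its points at any initial time. That means $F(\Omega,w(k))\subseteq\Omega$ for every $k$, or at the very least $x(j,\xi,w)\in\Omega$ for all $\xi\in\Omega$.

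Your justification, ``advancing one step along a shifted input lands in the set again'', only shows that $\Omega$ contains the trajectories that generated it. A point $z$ reached at time $k$ is guaranteed to be mapped into $\Omega$ by $F(\cdot,w(k))$, but not by $F(\cdot,w(m))$ for $m\neq k$. Restarting $z$ with $w$ concatenates a piece of $w$ with $w$ itself, which is not a time shift of $w$. Since such $z$ generally lie outside $r\mathbb B$, nothing in your construction puts the resulting states in $\Omega$. Your closure step has the same defect, because it uses $F(x_n,w(k))\in\Omega$ for a fixed $k$.

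Here is a concrete counterexample. Take $x=\tfrac12x^-+w^-$, with $w(k)=1$ for even $k$ and $w(k)=0$ for odd $k$, and take $r=0.1$. The map is increasing, so every trajectory from $r\mathbb B$ along either shift stays below the $2$-periodic solution taking the values $2/3$ and $4/3$. Hence $\Omega\subset(-\infty,4/3]$. Yet $1=x(1,0,w)\in\Omega$ and $F(1,w(0))=3/2\notin\Omega$. The same example defeats your ``alternative'' set built from $w$ alone.

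\paragraph{The repair, and how it compares with the paper}
The repair keeps your idea. Generate $\Omega$ from $r\mathbb B$ (or even from $\{0\}$) using \emph{all} inputs $w'$ with $\|w'\|_\infty\le M:=\|w\|_\infty$, not only the shifts of $w$. Then:
\begin{itemize}
\item the $\delta$ISS comparison with the zero solution is uniform over such inputs, so $\Omega\subset(\beta(r,0)+\gamma(M))\mathbb B$;
\item the unclosed set is mapped into itself by $F(\cdot,v)$ for every $|v|\le M$, since you can simply append $v$ to the input;
\item your continuity argument then shows the closure has the same property.
\end{itemize}
So the closure is compact and forward invariant for every input bounded by $M$, from any initial time.

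With this change your proof is correct, and it takes a different route from the paper. The paper sets $\xi_2=0$, $w_2=0$ to get ISS, invokes the converse ISS-Lyapunov theorem, and uses the sublevel set $\{V\le\alpha^{-1}\circ\sigma(M)\}$ as the invariant set. Your route is more elementary: it uses only the $\delta$ISS estimate and continuity of $F$. The paper's route relies on a converse theorem, with tacit normalizations such as $\mathrm{Id}-\alpha$ nondecreasing needed for that sublevel set to be invariant. In exchange, it describes the invariant set explicitly.
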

    

    \section{Projected controllers and problem formulation}
    \label{sec:dynamics}

    Consider a general discrete-time nonlinear plant:
    \begin{align}
        \label{eq:plant-NL}
        &\begin{cases}
            \; x_p = f_p(x_p^-, u^-, w^-), \\
            \; e = h_p(x_p),
        \end{cases}
    \end{align}
    where $x_p \in \real^{n_p}$ is the plant state, $e \in \real^{n_e}$ is the plant tracking error (taking the reference as zero),
    $u \in \real^{n_u}$ is the control input and $w \in \real^m$ is an external disturbance.
    In this paper, we are interested in the analysis and design of a projected controller for the plant \eqref{eq:plant-NL}, which will be specified in Section~\ref{subsec:projection} below.
    In particular, we aim for incremental stability properties of the closed-loop interconnection.

    \subsection{Projected controllers}
    \label{subsec:projection}

    A projected controller consists of a nominal nonlinear controller
    \begin{align}
        \label{eq:controller-NL-nominal}
        \begin{cases}
            \; x_c = f_c(x_c^-, e), \\
            \; u = h_c(x_c),
        \end{cases}
    \end{align}
    with state $x_c \in \real^{n_c}$, and a projection operator, rendering some set $\S_{eu} \subset \real^{n_e} \times \real^{n_u}$ in the $(e,u)$-space forward invariant, i.e., ensuring that $(e(j),u(j)) \in \S_{eu}$ at all times $j \in \nat$, without altering the controller behavior in the interior of $\S_{eu}$.
    The appropriate projection operator can be defined as the solution of a constrained quadratic program (QP), as follows.
    \begin{definition}
        \label{def:projection}
        Given a closed set $\mathcal S \subset \real^{n_c}$, a positive definite matrix $P \in \real^{n_c \times n_c}$ and a point $x \in \real^{n_c}$, the projection {$\Pi^P_\mathcal{S}(x)$} of $x$ on $\mathcal S$ with metric $P$ is defined as
        \begin{align}
            \label{eq:projection}
            \Pi^P_\S (x) := \arg\min\nolimits\limits_{y \in \S} \| y - x \|_P,
        \end{align}
        i.e., the set of closest points to $x$ in $\S$ according to the norm $\|\cdot\|_P$.
    \end{definition}
    For a closed set $\S$, $\Pi^P_\S(x)$ is non-empty due to the Weierstrass theorem, but can contain more than one point.
    If the set $\S$ is additionally convex, $\Pi^P_\S(x)$ is a singleton for all $x$, see, e.g., \cite[Theorem 3.14]{Bauschke2017}, substituting $\|\cdot\|$ with $\|\cdot\|_P$.
    In the remainder of this paper, we consider only projections on closed convex sets, making $\Pi^P_\S(x)$ a singleton, and we consider them not as sets, but as standard functions $\Pi^P_\S: \real^n \to \real^n$.
    
    The new controller dynamics is then given by the following discrete-time \textit{projected dynamical system}
    \begin{align}
        \label{eq:controller-NL-proj}
        \begin{cases}
            \; x_c = \Pi^P_{\S(e)} \left( f_c(x_c^-, e) \right), \\
            \; u = h_c(x_c),
        \end{cases}
    \end{align}
    where the $e$-dependent set $\S(e)$ is given by
    \begin{align}
        \label{eq:Se}
        \S(e) := \left\{ x_c \in \real^{n_c} : (e, h_c(x_c)) \in \S_{eu} \right\},
    \end{align}
    and we make the following assumption on the set $\S(e)$.
    \begin{assumption}
        \label{ass:convex-Se}
        For all $e \in \real^{n_e}$, the set $\S(e)$ is nonempty, closed and convex.
    \end{assumption}
    \begin{remark}
        \label{rmk:error-dependent}
        The particular structure of \eqref{eq:controller-NL-proj}-\eqref{eq:Se}, where the projection operator is defined as a function of $e$, follows from the control application of this framework \cite{Sharif2024Analysis,vdEijnden26HIGS}.
        Indeed, as the value of $e$ represents the error signal coming from the controlled plant \eqref{eq:plant-NL}, it cannot be modified.
        Then, the only way to keep $(u, e)$ inside the set $\S_{eu}$ is to project $u$ on the ($e$-dependent) set $\S(e)$ in \eqref{eq:Se} (see Figure~\ref{fig:proj_difference}).
    \end{remark}
    To give a practical motivation and an illustrative example of the use of projections in control applications, we particularize dynamics \eqref{eq:controller-NL-proj}-\eqref{eq:Se} to the case of the First-Order Projection Element (FOPE) \cite{BertolloADHS24,Sharif2024Analysis}.

    \begin{example}{First Order Projection Elements (FOPEs)}
        Among the various applications of projections in control, the so-called hybrid integrator-gain system (HIGS) \cite{Deenen2021Projection-Based} had quite some engineering successes in motion control of high-precision positioning systems, and it was shown to be able to overcome fundamental limitations of LTI controllers \cite{vDinther2021Overcoming,vdEijnden26HIGS}, thanks to the projection of continuous-time integrator dynamics on the $[0, k_h]$-sector in the input-output plane.
        The rationale behind the choice of such a sector is to ensure that the tracking error and the control output have the same sign at all times, which enhances the performance of the controller.

        In order to generalize this control element and consider discrete-time implementations, the FOPE, presented in \cite{BertolloADHS24}, is defined as a SISO discrete-time controller with scalar state, where linear dynamics are projected on the more general $[k_1,k_2]$-sector.
        Its dynamics are given in terms of \eqref{eq:controller-NL-proj} by
        \begin{align*}
            \begin{cases}
                x_c = \Pi_{\S(e)} \left( f_c(x_c^-, e) \right) = \Pi_{\S(e)} \left( a x_c^- + b e \right), \\
                u = x_c,
            \end{cases}
        \end{align*}
        or, in shorthand,
        \begin{align}
            \label{eq:FOPE}
            u = \Pi_{\S(e)} ( \underbrace{a u^- + b e}_{=: v} ),
        \end{align}
        where $a \in \real$, $b \in \real_{>0}$ are controller parameters, and $v$ describes the right-hand side of the unprojected linear dynamics.
        The output of the FOPE is equal to its state $u$.
        Since, in this example, the desired invariant set $\S_{eu}$ is the $[k_1,k_2]$-sector, where $k_1, k_2 \in \real, \; k_1 < k_2$ are two additional control parameters, the error-dependent set $\S(e)$ in \eqref{eq:Se} is given by
        \begin{align}
            \label{eq:Se-FOPE}
            \begin{split}
                \S(e) := &\left\{ u \in \real : k_1 e^2 \leq ue \leq k_2 e^2 \right\} \\ =
                &\begin{cases}
                    [k_1 e, k_2 e], &\mbox{when } e \geq 0, \\
                    [k_2 e, k_1 e], &\mbox{when } e < 0.
                \end{cases}
            \end{split}
        \end{align}
        Figure~\ref{fig:proj_difference} illustrates an example of the sets $\S_{eu}$ and $\S(e)$, as well as the key difference between the projection of $v$ on the input-dependent set $\S(e)$ and the projection of the pair $(e,u)$ on the set $\S_{eu}$ (see the discussion in Remark~\ref{rmk:error-dependent}).
        Notice, moreover, that while the $[k_1,k_2]$-sector is not convex, each set $\S(e)$, being a segment, is convex, and therefore it satisfies Assumption~\ref{ass:convex-Se}.
        
        Note that the discrete-time HIGS, proposed in \cite{Sharif2024Analysis}, is a particular case of the FOPE in \eqref{eq:FOPE}, where $a = 1$ (integrator dynamics), $k_1 = 0$ and $k_2 = k_h > 0$.
    \end{example}
    
    The example above shows that it is not necessary for $\S_{eu}$ to be convex in order for $\S(e)$ to be convex (as required by Assumption~\ref{ass:convex-Se}). Convexity of $\S_{eu}$ is also, in general, not sufficient for convexity of $\S(e)$ (consider, e.g., the convex set $\S_{eu} := \{(e,u) \in \real^2 : u \leq e\}$ and the function $h_c: \real^2 \to \real$ defined by $h_c(x) := x_1 x_2$).

    \begin{figure}[!htb]
        \centering
        \includegraphics[width=0.62\textwidth]{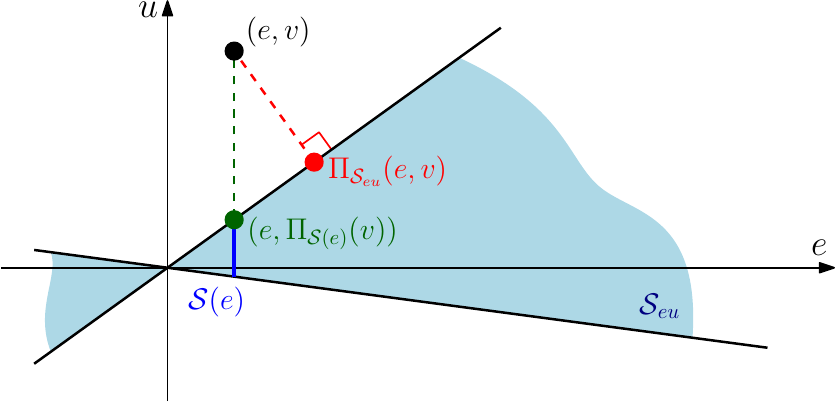}
        \caption{Difference between projecting the pair $(v,e)$ on the set $\S_{eu}$ (red) and projecting $v$ on the input-dependent set $\S(e)$, as per \eqref{eq:FOPE} (green).}
        \label{fig:proj_difference}
    \end{figure}

    \subsection{Problem formulation}
    \label{subsec:problem}

    In this paper, we consider the closed-loop interconnection
    \begin{align}
        \label{eq:CL-NL}
        \begin{cases}
            x = \begin{bmatrix}
                x_p \\ x_c
            \end{bmatrix} = f(x^-, w^-) := \begin{bmatrix}
                f_p(x_p^-, u^-, w^-), \\
                \Pi^P_{\S(e)} \left( f_c(x_c^-, e) \right)
            \end{bmatrix}, \\
            \; e = h_p(x_p), \\
            \; u = h_c(x_c),
        \end{cases}
    \end{align}
    with state $x \in \real^{n_p} \times \real^{n_c}$ and $f_p, h_p, f_c, h_c,$ and $\S: \real^{n_e} \rightrightarrows \real^{n_c}$ coming from \eqref{eq:plant-NL}, \eqref{eq:controller-NL-proj}, \eqref{eq:Se}.

    \begin{remark}
        The reader might notice that, while in \eqref{eq:CL-NL} the plant dynamics is fully explicit, the controller dynamics also depends on the current error $e(j)$, with $j$ the current time.
        This is because, as mentioned previously, the goal of the projection is to keep $(e(j), u(j))$ in the set $\S_{eu}$.
        The complete dynamics \eqref{eq:CL-NL} could be written down explicitly by using the relation $e = h_p(f_p(x_p^-, h_c(x_c^-), w^-))$, as we will illustrate in the example in Section~\ref{sec:PW-dynamics} for the special case where all the maps in \eqref{eq:CL-NL} are linear.
        For the general nonlinear case, we preferred using the form in \eqref{eq:CL-NL}, to avoid overloading the notation.
    \end{remark}

    In addition to the convexity of $\S(e)$, $e \in \real^{n_e}$, as per Assumption~\ref{ass:convex-Se}, we make the following additional assumption on the regularity of the projection of the image of $f_c$ on $\S(e)$:
    \begin{assumption}
        \label{ass:regularity-Se}
        There exists $\alpha_{\S} \in \K_\infty$ such that, for all $x_c \in \mathrm{Im}(\S)$, and all $e_1, e_2 \in \real^{n_e}$,
        \begin{align}
            \label{eq:LH}
            \left| \Pi^P_{\S(e_1)}(f_c(x_c, e_2)) - \Pi^P_{\S(e_2)}(f_c(x_c, e_2)) \right| \leq \alpha_\S(|e_1 - e_2|).
        \end{align}
    \end{assumption}
    While the bound in \eqref{eq:LH} is the one that we will use in the derivations below, the inequality might not be easy to verify, since it requires the solution of the projection QP \eqref{eq:projection}. In the next lemma (whose proof can be found in Appendix~\ref{app:proof-technical}), we identify three conditions, related only to $f_c$ and $\S$, which ensure that Assumption~\ref{ass:regularity-Se} is satisfied.
    \begin{lemma}
        \label{lem:regularity-S}
        Consider the projection-based controller \eqref{eq:controller-NL-proj}-\eqref{eq:Se}.
        Assumption~\ref{ass:regularity-Se} is satisfied if one (or more) of the following conditions holds:
        \begin{enumerate}[{\it (i)}]
            \item \ [$\S$ is a pure translation] for all $e \in \real^{n_e}$, $\S(e) := \S_0 + f_\S(e)$, where $\S_0 \subset \real^{n_c}$ is a non-empty, closed, convex set, and $f_\S: \real^{n_e} \to \real^{n_c}$ is globally Lipschitz;
            \item \ [$\S$ is a convex polyhedron] for all $e \in \real^{n_e}$, $\S(e) := \{x_c \in \real^{n_c} : Ax_c \leq b(e)\}$, where $A \in \real^{n_\S \times n_c}$ and $b: \real^{n_e} \to \real^{n_\S}$ is globally Lipschitz;
            \item \ [global boundedness] there exist $M_c, M_\S > 0$ and $\alpha_H \in \K_\infty$ such that, for all $e_1, e_2 \in \real^{n_e}$ and all $x_c \in \mathrm{Im}(\S)$,
            \begin{subequations}
                \begin{align}
                    &\left\| f_c(x_c, e_2) \right\| \leq M_c, \label{eq:fc-unif-bounded} \\
                    &\S(e_1) \subseteq M_\S \mathbb B, \label{eq:Se-unif-bounded} \\
                    &d_H(\S(e_1), \S(e_2)) \leq \alpha_H(|e_1 - e_2|). \label{eq:dH-Kinf}
                \end{align}
            \end{subequations}
        \end{enumerate}
    \end{lemma}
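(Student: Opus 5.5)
The three conditions are handled separately. Throughout, write $z := f_c(x_c, e_2)$ for the fixed point being projected, and use that $\|\cdot\|_P$ and $|\cdot|$ are equivalent with constants $\lambda_m(P)^{1/2}$ and $\lambda_M(P)^{1/2}$. It therefore suffices in each case to bound $\|\Pi^P_{\S(e_1)}(z) - \Pi^P_{\S(e_2)}(z)\|_P$ by a $\K_\infty$ function of $|e_1-e_2|$, and at the end absorb the factor $\kappa(P)^{1/2}$ into $\alpha_\S$.

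\emph{Case (i).} For a translated set, projection commutes with translation:
\[
\Pi^P_{\S_0 + c}(z) = c + \Pi^P_{\S_0}(z - c),
\]
which follows directly from the change of variables $y \mapsto y - c$ in \eqref{eq:projection}. The projection onto a closed convex set is nonexpansive in the $P$-norm, by \cite[Prop.~4.16]{Bauschke2017} in the Hilbert space $(\real^{n_c}, \langle\cdot,\cdot\rangle_P)$. Hence the difference of the two projections is bounded by $\|c_1-c_2\|_P + \|c_1-c_2\|_P$, where $c_i = f_\S(e_i)$. Lipschitzness of $f_\S$ then gives a linear $\alpha_\S$.

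\emph{Case (ii).} Here I will use Lipschitz stability of the parametric QP
\[
\min_y \|y-z\|_P^2 \quad \text{s.t.} \quad Ay \le b,
\]
with respect to the right-hand side $b$. I plan to invoke the classical Hoffman-type bound for polyhedra with a fixed constraint matrix: there is $L_A$ such that $d(y, \{Ay\le b_2\}) \le L_A |(Ay - b_2)_+| \le L_A |b_1-b_2|$ for $y$ with $Ay\le b_1$. This gives $d_H(\S(e_1),\S(e_2)) \le L_A L_b |e_1-e_2|$. That alone is not enough, since for general closed convex sets the projection is only Hölder ($1/2$) in the Hausdorff distance. I have two routes. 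The first is to cite the known result that the solution of a strongly convex QP with fixed $A$ and linear cost is globally Lipschitz in $b$; this follows from the finitely many active-set pieces, on each of which the solution is affine in $(z,b)$, together with continuity. The second is to accept the Hölder bound $\|\Pi_{C_1}(z)-\Pi_{C_2}(z)\|^2 \le$ something involving $d_H(d_H + \mathrm{dist})$, but that bound is not uniform in $z$. For that reason I will use the first route: polyhedral piecewise-affinity in $b$ plus continuity gives a global Lipschitz constant, because there are finitely many pieces, each with a fixed linear map in $b$.

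\emph{Case (iii).} I will use the standard estimate for projections onto convex sets $C_1, C_2$ with $\delta = d_H(C_1,C_2)$. Let $p_i = \Pi^P_{C_i}(z)$. Choose $q_1\in C_2$ with $\|q_1-p_1\|\le\delta$ and $q_2\in C_1$ with $\|q_2-p_2\|\le\delta$ (all in the $P$-metric, after converting $d_H$ with the norm-equivalence constants). The variational inequalities $\langle z-p_1, q_2-p_1\rangle_P \le 0$ and $\langle z-p_2, q_1-p_2\rangle_P\le0$, after adding and rearranging, give
\[
\|p_1-p_2\|_P^2 \le \delta\,(\|z-p_1\|_P + \|z-p_2\|_P).
\]
Boundedness \eqref{eq:fc-unif-bounded}--\eqref{eq:Se-unif-bounded} bounds the factor in parentheses by a constant $2\kappa'(M_c+M_\S)$. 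Hence
\[
|p_1-p_2| \le C\sqrt{\alpha_H(|e_1-e_2|)},
\]
and $\alpha_\S := C\sqrt{\alpha_H}$ belongs to $\K_\infty$. This is precisely where global boundedness is needed.

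\textbf{Main obstacle.} The hard step is case (ii), namely obtaining a bound that is uniform in $z$ without boundedness. I would rely on the piecewise-affine structure of the polyhedral projection in $b$ to get Lipschitz continuity.
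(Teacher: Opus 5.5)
Your proposal is correct and follows essentially the same route as the paper in all three cases. For (i) you use the translation identity $\Pi^P_{\S_0+c}(z)=c+\Pi^P_{\S_0}(z-c)$ together with nonexpansiveness; for (ii) you use the continuous piecewise-affine dependence of the parametric QP solution on the right-hand side $b$, which the paper takes from the explicit-MPC result of Bemporad et al.; and for (iii) you sum the two variational inequalities evaluated at Hausdorff-close points, which yields a bound of order $\sqrt{\alpha_H(|e_1-e_2|)}$ (your Hoffman-bound detour in (ii) is unnecessary but harmless).
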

    Lastly, we assume sufficient regularity of the maps in the right-hand side of \eqref{eq:CL-NL}:
    \begin{assumption}
        \label{ass:regularity}
        The functions $f_p, h_p, f_c, h_c$ in \eqref{eq:CL-NL} are equal to zero when all their arguments are zero.
        Moreover, $f_p, f_c$ are continuous and $h_p,h_c$ are globally Lipschitz.
    \end{assumption}

    \begin{remark}
        It is fair to underline that the three conditions in Lemma~\ref{lem:regularity-S} are not mild.
        In particular, items \textit{(i)-(ii)} require a rigid translation or inflation of the set, while condition \textit{(iii)} requires global boundedness of both $f_c$ and $\S$ (as a set-valued mapping).
        However, these conditions are not conservative: we can show, with a relatively simple example, that an unbounded $f_c$ (item \textit{(iii)} not satisfied), combined with a set $\S(e)$ that rotates as $e$ varies (items \textit{(i)-(ii)} not satisfied), can make it impossible to find $\alpha_\S$ satisfying \eqref{eq:LH}.
        \\
        Consider $e \in \real$, and suppose $\S(e) \subset \real^2$ is a square with side length $2 \ell$, centered in 0 and rotated by $e$ counter-clockwise, i.e.,
        \begin{align*}
            \S(e) := \{ x \in \real^2 : \; &|x_1 \cos (e) + x_2 \sin (e)| \leq \ell \mbox{ and } \\ &|-x_1 \sin (e) + x_2 \cos (e)| \leq \ell \}.
        \end{align*}
        Consider now the controller dynamics
        $$f_c(x_1, x_2) := \begin{bmatrix}
            x_1 \\ \frac{1}{1 - x_2/\ell} - 1
        \end{bmatrix},$$
        and its projection on $\S(e)$ according to the Euclidean norm, i.e., $P = I$.
        Let $p := (0, p_2),$ with $p_2 \in (\ell^2/(\ell+1), \ell)$.
        For $e = 0$, we always have $\Pi_{\S(0)}(f_c(p)) = (0, \ell)$.
        Conversely, for any $e \in (0, \pi/4)$, there always exists $p_2$ such that the projection ends up in the ``top-right corner'' of the rotated square.
        In particular, this happens if $1/(1-p_2/\ell)-1 > \ell/\sin(e)$.
        Combining this fact with \eqref{eq:LH}, we have that the function $\alpha_\S$ has to satisfy $\alpha_\S(e - 0) \geq \ell$ for all $e \in (0, \pi/4)$.
        However, since $\alpha_\S \in \K_\infty$, it needs to be continuous and such that $\alpha_\S(0) = 0$, which is impossible, meaning that Assumption~\ref{ass:regularity-Se} is not satisfied.
    \end{remark}

    Our goal is to identify sufficient conditions to prove that the system \eqref{eq:CL-NL} is $\delta$ISS, motivated by two important consequences of this property.
    First, $\delta$ISS shows that the closed loop is not sensitive to small changes in the initial states or the inputs, see \eqref{eq:dISS}.
    Second, global uniform convergence follows directly from $\delta$ISS of \eqref{eq:CL-NL} and Assumptions~\ref{ass:convex-Se}-\ref{ass:regularity}.
    This is due to Corollary~\ref{cor:dISS-convergence}, and the fact that, under such assumptions, the projection $\Pi^P_{\S(e)}$ inherits the continuity of its argument, which we will prove in the next section. We recall that having the convergence property guarantees the existence of a unique steady-state solution, and that periodic inputs lead to periodic outputs, see Section~\ref{sec:incremental-convergence}. These aspects are key in performance assessment through, e.g., nonlinear Bode-plots, as we will show later in an example.

    We will present in Sections~\ref{sec:inheritance-SG}-\ref{sec:PW-dynamics} two alternative methodologies to certify $\delta$ISS of \eqref{eq:CL-NL}.
    The first methodology is based on the proof that the projected dynamics \eqref{eq:controller-NL-proj}-\eqref{eq:Se} inherits incremental \textit{quadratic} properties of the nominal (unprojected) dynamics \eqref{eq:controller-NL-nominal} if the projection operator is well-designed; this fact is used in combination with {\it incremental small-gain arguments}, to certify $\delta$ISS of the closed-loop interconnection \eqref{eq:CL-NL}.
    We will call this the ``QIS preservation and small-gain approach''.
    The second approach is based on the explicit expression of dynamics \eqref{eq:CL-NL} as a piecewise-smooth system, and on the direct construction of an {\it incremental Lyapunov function} for the closed-loop system exploiting its continuity.
    We will call this the ``closed-loop $\delta$ISS Lyapunov function approach''.

    \section{QIS preservation and small-gain approach}
    \label{sec:inheritance-SG}

    \subsection{Preservation of incremental quadratic ISS}
    \label{subsec:inheritance}
    In this section, we prove that the projected dynamical system \eqref{eq:controller-NL-proj} preserves certain quadratic incremental stability properties of the nominal system \eqref{eq:controller-NL-nominal}.
    We start by showing that, under Assumptions~\ref{ass:convex-Se}-\ref{ass:regularity}, the right-hand side of \eqref{eq:controller-NL-proj}, and therefore the whole closed-loop dynamics $f$ in \eqref{eq:CL-NL}, is continuous. The proof of the lemma can be found in Appendix~\ref{app:proof-technical}.
    \begin{lemma}
        \label{lem:Lipschitz}
        Under Assumptions~\ref{ass:convex-Se}-\ref{ass:regularity}, $f$ in \eqref{eq:CL-NL} is continuous.
    \end{lemma}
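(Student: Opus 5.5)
The closed-loop map $f$ in \eqref{eq:CL-NL} has two blocks, and I will treat them separately. The plant block $f_p(x_p^-, h_c(x_c^-), w^-)$ is a composition of continuous maps, by Assumption~\ref{ass:regularity}, so it is continuous in $(x^-,w^-)$. The same holds for the current error $e = h_p(f_p(x_p^-, h_c(x_c^-), w^-))$, which is therefore a continuous function of $(x^-,w^-)$. Everything then reduces to showing that the controller block
\begin{align*}
(x^-,w^-) \mapsto \Pi^P_{\S(e)}\bigl(f_c(x_c^-, e)\bigr)
\end{align*}
is continuous. Since $e$ and $f_c(x_c^-,e)$ are continuous in $(x^-,w^-)$, it suffices to show that the map $(z,e) \mapsto \Pi^P_{\S(e)}(z)$ is continuous along the relevant arguments.

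To do so, I will fix a convergent sequence $(x^-_k,w^-_k)\to(x^-,w^-)$, with corresponding $e_k \to e$ and $z_k := f_c(x^-_{c,k}, e_k) \to z := f_c(x_c^-,e)$. I then split the error using the triangle inequality:
\begin{align*}
\bigl|\Pi^P_{\S(e_k)}(z_k) - \Pi^P_{\S(e)}(z)\bigr| \leq \bigl|\Pi^P_{\S(e_k)}(z_k) - \Pi^P_{\S(e_k)}(z)\bigr| + \bigl|\Pi^P_{\S(e_k)}(z) - \Pi^P_{\S(e)}(z)\bigr|.
\end{align*}

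The first term is handled by a standard property. Since $\S(e_k)$ is closed, convex and nonempty (Assumption~\ref{ass:convex-Se}), the projection is nonexpansive in the $P$-norm: $\|\Pi^P_{\S}(a)-\Pi^P_{\S}(b)\|_P \le \|a-b\|_P$. This follows from the variational inequality $\langle a-\Pi(a), y-\Pi(a)\rangle_P\le 0$ for all $y\in\S$, exactly as in \cite[Theorem 3.14]{Bauschke2017} with the $P$-inner product. Converting norms gives the bound $\sqrt{\kappa(P)}\,|z_k-z|$, uniformly in $k$, and this tends to zero.

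The second term is where Assumption~\ref{ass:regularity-Se} comes in. I expect this to be the main obstacle, because \eqref{eq:LH} only controls the difference $\Pi^P_{\S(e_1)}(f_c(x_c,e_2)) - \Pi^P_{\S(e_2)}(f_c(x_c,e_2))$ under two restrictions: the argument must be of the form $f_c(x_c,e_2)$ with $x_c\in\mathrm{Im}(\S)$, and $e_2$ must be the same in the argument and in one of the sets. I will arrange this by rewriting the split. The intermediate point should be $\Pi^P_{\S(e_k)}(f_c(x^-_{c},e))$ rather than the projection of $z$ with a shifted $e$ argument; the first term then becomes $\Pi^P_{\S(e_k)}(z_k) - \Pi^P_{\S(e_k)}(f_c(x_c^-,e))$, which is still handled by nonexpansiveness and continuity of $f_c$. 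The second term becomes exactly \eqref{eq:LH} with $e_1=e_k$ and $e_2=e$, so it is bounded by $\alpha_\S(|e_k-e|)\to 0$.

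This rewriting requires $x_c^-\in\mathrm{Im}(\S)$. That holds along all solutions after the first step, since every controller state is produced by a projection onto some $\S(e)$. I would therefore state continuity of $f$ on the relevant domain $\real^{n_p}\times\mathrm{Im}(\S)$ (which is forward invariant), or note that this is the domain on which the closed loop is considered. If continuity on all of $\real^{n_c}$ is needed, I would first argue that $\mathrm{Im}(\S)$ can be taken to be all of $\real^{n_c}$ where relevant, or else check this point against the authors' intended reading of Assumption~\ref{ass:regularity-Se}.
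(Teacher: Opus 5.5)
Your proof is correct and follows the paper's own argument. You use the same triangle-inequality split through $\Pi^P_{\S(e_1)}(f_c(x_2,e_2))$, bound one term by $\sqrt{\kappa(P)}\,|f_{c,1}-f_{c,2}|$ via nonexpansiveness, and bound the other by $\alpha_\S(|e_1-e_2|)$ via \eqref{eq:LH}; your ``rewritten'' intermediate point is in fact identical to the one in your first split, since $z=f_c(x_c^-,e)$. Your domain caveat is also right: the paper asserts the bound for all $(x_2,e_2)\in\real^{n_c}\times\real^{n_e}$, although Assumption~\ref{ass:regularity-Se} only covers $x_c\in\mathrm{Im}(\S)$, so your version is the more careful statement. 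Because you anchor \eqref{eq:LH} at the limit point, it gives continuity of $f$ at every $(x^-,w^-)$ with $x_c^-\in\mathrm{Im}(\S)$, a set every solution enters after one step.
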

    Next, we will show that, if the function
    \begin{align}
        \label{eq:Lyapunov-open-loop}
        V(x_1,x_2) := \|x_1 - x_2\|_P,
    \end{align}
    is a $\delta$ISS Lyapunov function for the nominal controller \eqref{eq:controller-NL-nominal}, then the same is true for the projected controller \eqref{eq:controller-NL-proj}.
    Hence, the $\delta$ISS Lyapunov function is ``inherited'' by the projected version.
    This result can be seen as a discrete-time equivalent of \cite[Theorem 2]{Heemels2020Oblique}.

    \begin{thm}
        \label{thm:dISS-inheritance}
        Suppose that Assumptions~\ref{ass:convex-Se}-\ref{ass:regularity} hold.
        If $V$ in \eqref{eq:Lyapunov-open-loop} is a $\delta$ISS Lyapunov function for the nominal controller \eqref{eq:controller-NL-nominal}, i.e., there exist $\alpha \in (0,1)$ and $\sigma \in \K$ such that
        \begin{align}
            \label{eq:IQISS-Lyapunov}
            \|f_c(x_1, e_1) - f_c(x_2, e_2)\|_P \leq \alpha \|x_1 - x_2\|_P + \sigma(|e_1 - e_2|),
        \end{align}
        then $V$ is also a $\delta$ISS Lyapunov function for the projected controller \eqref{eq:controller-NL-proj}, provided that the same norm $P$ is used in both definitions.
    \end{thm}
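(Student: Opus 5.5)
The plan is to bound the increment of the projected map directly. I would split it into a ``same set'' part, handled by nonexpansiveness of the projection in the $P$-norm, and a ``different sets'' part, handled by Assumption~\ref{ass:regularity-Se}. Fix $x_1,x_2\in\real^{n_c}$ and $e_1,e_2\in\real^{n_e}$. Write $x_i^+ := \Pi^P_{\S(e_i)}(f_c(x_i,e_i))$ for the two successors of the projected controller \eqref{eq:controller-NL-proj}.

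The first ingredient is the standard fact that, for a nonempty closed convex set $\S$ (Assumption~\ref{ass:convex-Se}), $\Pi^P_\S$ is nonexpansive in $\|\cdot\|_P$. To prove it, I will use the variational characterization $\langle x-\Pi^P_\S(x),\, z-\Pi^P_\S(x)\rangle_P\le 0$ for all $z\in\S$. This follows from the fact that $\|\cdot\|_P$ is the Euclidean norm after the change of coordinates $P^{1/2}$, and that $P^{1/2}\S$ is still closed and convex. Applying the inequality at $x$ with $z=\Pi^P_\S(y)$, and at $y$ with $z=\Pi^P_\S(x)$, then summing and using Cauchy–Schwarz in $\langle\cdot,\cdot\rangle_P$, gives $\|\Pi^P_\S(x)-\Pi^P_\S(y)\|_P\le\|x-y\|_P$. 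This is exactly where using the same metric $P$ in the projection and in $V$ matters.

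Next comes the triangle inequality. I insert the intermediate point $\Pi^P_{\S(e_1)}(f_c(x_2,e_2))$, which yields
\begin{align*}
\|x_1^+-x_2^+\|_P &\le \|\Pi^P_{\S(e_1)}(f_c(x_1,e_1))-\Pi^P_{\S(e_1)}(f_c(x_2,e_2))\|_P \\
&\quad+\|\Pi^P_{\S(e_1)}(f_c(x_2,e_2))-\Pi^P_{\S(e_2)}(f_c(x_2,e_2))\|_P .
\end{align*}
By nonexpansiveness and \eqref{eq:IQISS-Lyapunov}, the first term is at most $\alpha\|x_1-x_2\|_P+\sigma(|e_1-e_2|)$. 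The second term is controlled by Assumption~\ref{ass:regularity-Se}: it is at most $\sqrt{\lambda_M(P)}\,\alpha_\S(|e_1-e_2|)$, after converting the Euclidean bound \eqref{eq:LH} to the $P$-norm.

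The subtle point, and the step I expect to be the main obstacle, is that \eqref{eq:LH} is only assumed for $x_c\in\mathrm{Im}(\S)$. Hence the Lyapunov decrease should be stated along actual trajectories of \eqref{eq:controller-NL-proj}, where states lie in $\mathrm{Im}(\S)$ after the first step. Alternatively, the Lyapunov function can be considered on the domain $\mathrm{Im}(\S)\times\mathrm{Im}(\S)$, which the projected dynamics leaves invariant since every successor lies in some $\S(e)$. I would make this restriction explicit and note that it is harmless, because any initial condition is mapped into $\mathrm{Im}(\S)$ in one step.

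Collecting the bounds gives, for $x_1,x_2\in\mathrm{Im}(\S)$,
\[
V(x_1^+,x_2^+)\le \alpha V(x_1,x_2)+\tilde\sigma(|e_1-e_2|),\qquad \tilde\sigma:=\sigma+\sqrt{\lambda_M(P)}\,\alpha_\S\in\K .
\]
It remains to put this in the form of the definition. The sandwich bounds $\sqrt{\lambda_m(P)}|x_1-x_2|\le V\le\sqrt{\lambda_M(P)}|x_1-x_2|$ are immediate. The decrease takes the form $V^+-V\le-(1-\alpha)V+\tilde\sigma$, with the linear $\K_\infty$ function $s\mapsto(1-\alpha)s$ playing the role of the decrease rate. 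If a $\K_\infty$ gain is required, $\tilde\sigma$ can be enlarged to $\tilde\sigma+\mathrm{Id}$. This shows that $V$ is a $\delta$ISS Lyapunov function for the projected controller, with the same $\alpha$.
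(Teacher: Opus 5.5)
Your proposal is correct and follows the paper's proof essentially verbatim. It uses the same intermediate point $\Pi^P_{\S(e_1)}(f_c(x_2,e_2))$, bounds the first term by nonexpansiveness of the $P$-projection onto the convex set $\S(e_1)$ together with \eqref{eq:IQISS-Lyapunov}, and bounds the second by \eqref{eq:LH} scaled by $\sqrt{\lambda_M(P)}$, giving the same gain $\sigma+\sqrt{\lambda_M(P)}\,\alpha_\S$.

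Your observation that \eqref{eq:LH} is only assumed for $x_c\in\mathrm{Im}(\S)$ is a point the paper's proof passes over silently, and restricting the decrease condition to the forward-invariant set $\mathrm{Im}(\S)\times\mathrm{Im}(\S)$ is the honest fix. However, calling this harmless for trajectories that start outside $\mathrm{Im}(\S)$ would need a separate bound on the first step, which \eqref{eq:LH} does not provide.
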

    \begin{proof}
        Let us denote $f_{c,i} := f_c(x_i, e_i), \; i \in \{1,2\}$.
        We can use \eqref{eq:LH}, \eqref{eq:IQISS-Lyapunov} and the convexity of $S(e)$, from Assumption~\ref{ass:convex-Se}, to obtain
        \begin{align*}
            \left\| \Pi^P_{\S(e_1)} f_{c,1} \right. & - \left. \Pi^P_{\S(e_2)} f_{c,2} \right\|_P= \\
            &\hspace{-20pt}= \left\| \Pi^P_{\S(e_1)} f_{c,1} - \Pi^P_{\S(e_1)} f_{c,2} + \Pi^P_{\S(e_1)} f_{c,2} - \Pi^P_{\S(e_2)} f_{c,2} \right\|_P \\
            &\hspace{-20pt}\leq \left\| \Pi^P_{\S(e_1)} f_{c,1} - \Pi^P_{\S(e_1)} f_{c.2} \right\|_P + \left\| \Pi^P_{\S(e_1)} f_{c,2} - \Pi^P_{\S(e_2)} f_{c,2} \right\|_P \\
            &\hspace{-20pt}\leq \alpha \left\| x_1 - x_2 \right\|_P + \sigma \left( \left| e_1 - e_2 \right| \right) + \sqrt{\lambda_M(P)} \alpha_{\mathcal S}(|e_1-e_2|) \\
            &\hspace{-20pt}\leq \alpha  \left\| x_1 - x_2 \right\|_P + \gamma \left( \left| e_1 - e_2 \right| \right),
        \end{align*}
        where $\gamma(s) := \left(\sigma + \sqrt{\lambda_M(P)} \alpha_{\mathcal S} \right)(s)$ is of class $\K_\infty$, which proves $\delta$ISS of the projected dynamics \eqref{eq:controller-NL-proj}.
    \end{proof}

    An immediate consequence of the result above, obtained by setting $e_1 = e_2$ is that $\delta$UGAS of the nominal dynamics \eqref{eq:controller-NL-nominal}, certified by $V$ in \eqref{eq:Lyapunov-open-loop}, is inherited by the projected dynamics \eqref{eq:controller-NL-proj}.

    \begin{corollary}
        If $V$ in \eqref{eq:Lyapunov-open-loop} is an incremental Lyapunov function for the nominal controller \eqref{eq:controller-NL-nominal}, i.e., there exists $\alpha \in (0,1)$ such that
        \begin{align}
            \label{eq:IQS-Lyapunov}
            \|f_c(x_1, e) - f_c(x_2, e)\|_P \leq \alpha \|x_1 - x_2\|_P,
        \end{align}
        then $V$ is also an incremental Lyapunov function for the projected controller \eqref{eq:controller-NL-proj}, provided that the same norm $P$ is used in both definitions.
    \end{corollary}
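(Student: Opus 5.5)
The plan is to specialize the chain of inequalities in the proof of Theorem~\ref{thm:dISS-inheritance} to a common input $e_1 = e_2 = e$. The key fact is that the projection onto a closed convex set is nonexpansive in the norm used to define it. Fix $e \in \real^{n_e}$ and $x_1, x_2 \in \real^{n_c}$, and write $f_{c,i} := f_c(x_i, e)$. By Assumption~\ref{ass:convex-Se}, $\S(e)$ is nonempty, closed and convex. Hence $\Pi^P_{\S(e)}$ is single-valued and is characterized by the variational inequality
\begin{align*}
\langle z - \Pi^P_{\S(e)}(y), \, y - \Pi^P_{\S(e)}(y) \rangle_P \leq 0 \quad \mbox{for all } z \in \S(e),
\end{align*}
which follows from \cite[Theorem 3.16]{Bauschke2017} in the Hilbert space $(\real^{n_c}, \langle\cdot,\cdot\rangle_P)$.

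Apply this inequality twice: with $y = f_{c,1}$ and $z = \Pi^P_{\S(e)} f_{c,2}$, and with $y = f_{c,2}$ and $z = \Pi^P_{\S(e)} f_{c,1}$. Adding the two inequalities and applying Cauchy--Schwarz in $\langle\cdot,\cdot\rangle_P$ gives
\begin{align*}
\|\Pi^P_{\S(e)} f_{c,1} - \Pi^P_{\S(e)} f_{c,2}\|_P \leq \|f_{c,1} - f_{c,2}\|_P .
\end{align*}
This is the step that uses the hypothesis that the projection metric coincides with the Lyapunov metric $P$. In a mismatched norm the projection need not be nonexpansive, and the argument would fail there.

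Combining nonexpansiveness with \eqref{eq:IQS-Lyapunov} gives
\begin{align*}
\|\Pi^P_{\S(e)} f_c(x_1,e) - \Pi^P_{\S(e)} f_c(x_2,e)\|_P \leq \alpha \|x_1 - x_2\|_P .
\end{align*}
Since $\alpha \in (0,1)$, this means $V(x_1^+, x_2^+) - V(x_1,x_2) \leq -(1-\alpha) V(x_1,x_2)$, so the decrease condition in \eqref{eq:dISS-Lyapunov} holds with the class-$\K_\infty$ function $s \mapsto (1-\alpha)s$.

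The sandwich bounds $\sqrt{\lambda_m(P)}\,|x_1-x_2| \leq V \leq \sqrt{\lambda_M(P)}\,|x_1-x_2|$ hold trivially. These two facts together make $V$ an incremental Lyapunov function for \eqref{eq:controller-NL-proj}. Note that Assumption~\ref{ass:regularity-Se} is not needed here, because the input is common and the term involving $\alpha_\S$ never appears. I expect no real obstacle beyond stating the nonexpansiveness property carefully in the weighted inner product.
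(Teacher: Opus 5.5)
Your proposal is correct and follows the paper's route: the paper obtains the corollary by setting $e_1 = e_2$ in the proof of Theorem~\ref{thm:dISS-inheritance}, so the $\alpha_\S$ term vanishes and only nonexpansiveness of $\Pi^P_{\S(e)}$ in $\|\cdot\|_P$ combined with \eqref{eq:IQS-Lyapunov} is needed; you derive that nonexpansiveness explicitly from the variational inequality, where the paper simply invokes convexity of $\S(e)$. Your remark that Assumption~\ref{ass:regularity-Se} plays no role in this common-input case is also accurate.
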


    \begin{example}{$\delta$ISS of FOPEs}
        We can use Theorem~\ref{thm:dISS-inheritance} to show that the FOPE \eqref{eq:FOPE}-\eqref{eq:Se-FOPE} is $\delta$ISS whenever $|a| < 1$.
        Indeed, the set $\S(e)$ in \eqref{eq:Se-FOPE} can be written as a polyhedral set as in item \emph{(ii)} of Lemma~\ref{lem:regularity-S}, by selecting
        \begin{align*}
            A := \begin{bmatrix}
                1 \\ -1
            \end{bmatrix}, \quad\quad b(e) := \begin{cases}
                \big[
                    k_2 e, \; -k_1 e
                \big]^\top, &\mbox{when } e \geq 0, \\
                \big[
                    k_1 e, \; -k_2 e
                \big]^\top, &\mbox{when } e < 0.
            \end{cases}
        \end{align*}
        Consequently, Assumption~\ref{ass:regularity-Se} is satisfied.
        We can confirm this by explicitly computing $\alpha_H$ in \eqref{eq:LH} for this simple case, yielding $\alpha_H(s) := \bar k s := \max\{k_1, k_2\}s.$
        As the FOPE is a scalar-state system, we can use the incremental Lyapunov function $V(x_1, x_2) := |x_1 - x_2|$.
        Using dynamics \eqref{eq:FOPE} and the triangle inequality, we get
        \begin{align}
            \label{eq:FOPE-Lipschitz}
            |x_1 - x_2| \leq |a| |x_1^- - x_2^-| + b |e_1 - e_2|,
        \end{align}
        showing that $V$ is a $\delta$ISS Lyapunov function for the unprojected dynamics and thus, in view of Theorem~\ref{thm:dISS-inheritance}, the FOPE is $\delta$ISS.
        The (linear) incremental gain $\gamma$ in \eqref{eq:dISS} can be computed explicitly in this case.
        Indeed, using \eqref{eq:FOPE-Lipschitz} and the expression of $\alpha_H$ we just computed, we can use similar passages to the proof of Theorem~\ref{thm:dISS-inheritance} to conclude that
        \begin{align*}
            |x_1 - x_2| \leq |a| |x_1^- - x_2^-| + (b + \bar k) |e_1 - e_2|.
        \end{align*}
        By iterating the inequality backwards, as per standard Lyapunov results, we conclude that \eqref{eq:dISS} holds with $\beta(r,j) = |a|^j r$ and linear gain $\gamma = \frac{b + \bar k}{1 - |a|}$.
        Note that this recovers the result \cite[Theorem 3]{BertolloADHS24} as a special case.
    \end{example}

\subsection{Small-gain theorem}
\label{subsec:small-gain}

    In this section, we identify conditions under which the output interconnection of two $\delta$ISS systems is still $\delta$ISS.
    Such results, known as \textit{small-gain theorems}, are well-known in the non-incremental discrete-time setting \cite{Jiang2001Input-to-state}.
    Moreover, recently, two discrete-time incremental small-gain theorems have been proposed in \cite{Schimperna2025OnIncremental}, for the \textit{full-state} interconnection of two $\delta$ISS systems.
    With Theorem~\ref{thm:small-gain} is this section, we extend those results to the case of output feedback between $\delta$ISS subsystems.

    In the following, given a pair of initial conditions $\xi_1,\xi_2 \in \real^{n_p} \times \real^{n_c}$ and two input signals $w_1,w_2: \nat \to \real^{n_w}$, we will use the shorthand notation $\delta x := x(j,\xi_1,w_1) - x(j,\xi_2,w_2)$, and similarly for all the other coordinates.

    \begin{thm}
        \label{thm:small-gain}
        Consider system \eqref{eq:CL-NL}, given by the output interconnection of \eqref{eq:plant-NL} and \eqref{eq:controller-NL-proj}.
        Suppose that \eqref{eq:plant-NL}, \eqref{eq:controller-NL-proj} are incrementally detectable and $\delta$IOS.
        Let functions $\beta_i \in \K\L$ and $\gamma_i, \gamma^w \in \K$, $i \in \{1,2\}$ be such that the $\delta$IOS condition \eqref{eq:dIOS} can be specified for \eqref{eq:plant-NL}, \eqref{eq:controller-NL-proj} in the form
        \begin{align}
            \label{eq:dIOS-particular}
            &\begin{cases}
                |\delta e(j)| \leq \max \{ \beta_1(|\delta \xi_p|,j), \gamma_1(\|\delta u\|_\infty), \gamma^w(\|\delta w\|_\infty) \}, \\
                |\delta u(j)| \leq \max \{ \beta_2(|\delta \xi_c|,j), \gamma_2(\|\delta e\|_\infty) \}.
            \end{cases}
        \end{align}
        If the gain functions $\gamma_1, \gamma_2$ in \eqref{eq:dIOS-particular} satisfy the small-gain condition, i.e., for all $s > 0$,
        \begin{align}
            \label{eq:small-gain}
            \begin{cases}
                \gamma_1 \circ \gamma_2(s) < s, \\
                \gamma_2 \circ \gamma_1(s) < s,
            \end{cases}
        \end{align}
        then the closed-loop system \eqref{eq:CL-NL} is $\delta$ISS.
    \end{thm}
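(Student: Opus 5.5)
Our plan is to prove $\delta$IOS of the closed loop with respect to the output pair $(e,u)$. We then invoke Proposition~\ref{prop:dISS-dIOS} together with the incremental detectability of the two subsystems to upgrade this to $\delta$ISS. This mirrors the classical Jiang--Teel--Praly trajectory-based small-gain argument, transferred to increments.

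\textbf{Step 1: boundedness of the output increments.} Fix $\xi_1,\xi_2$ and bounded $w_1,w_2$, and apply \eqref{eq:dIOS-particular} on truncated horizons $[0,N]$. Causality lets us replace $\|\cdot\|_\infty$ with $\|\cdot\|_{[0,N]}$. Substituting the second inequality into the first gives
\[
\|\delta e\|_{[0,N]} \le \max\{\beta_1(|\delta\xi_p|,0),\ \gamma_1\circ\beta_2(|\delta\xi_c|,0),\ \gamma_1\circ\gamma_2(\|\delta e\|_{[0,N]}),\ \gamma^w(\|\delta w\|_\infty)\}.
\]
Since $\gamma_1\circ\gamma_2(s)<s$ for $s>0$ and $\|\delta e\|_{[0,N]}$ is finite, the third term cannot be the maximum unless $\|\delta e\|_{[0,N]}=0$. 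This yields a uniform-in-$N$ bound by $\max\{\beta_1(|\delta\xi|,0),\gamma_1\circ\beta_2(|\delta\xi|,0),\gamma^w(\|\delta w\|_\infty)\}$. Finiteness on a finite horizon is immediate because the trajectories exist, so no blow-up issue arises. The analogous bound for $\delta u$ follows from the second inequality in \eqref{eq:dIOS-particular}.

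\textbf{Step 2: asymptotic decay.} Next we show that $\limsup_j |\delta e(j)|$ and $\limsup_j |\delta u(j)|$ are bounded by $\gamma^w$-type terms, uniformly with a $\K\L$ rate. To do so, we apply \eqref{eq:dIOS-particular} from the shifted initial time $j_0=\lfloor j/2\rfloor$, which the time-invariance of the subsystems permits. The initial increment $|\delta x(j_0)|$ is bounded using incremental detectability together with Step 1. Iterating the composition $\gamma_1\circ\gamma_2$ then drives the transient contribution down. The standard device here is to build $\K\L$ functions by repeated halving of the time horizon, combined with the strict small-gain inequality, as in the proof of \cite[Theorem 2.1]{Jiang2001Input-to-state}, adapted to discrete time and to increments. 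This step is the main obstacle: it requires uniformity of the convergence and the careful construction of a single $\beta\in\K\L$. We would follow the discrete-time non-incremental proof line by line, replacing states and outputs by their increments. This replacement is legitimate because all subsystem estimates involve only the increment variables and the systems are time-invariant.

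\textbf{Step 3: conclusion.} Combining Steps 1 and 2 gives closed-loop $\delta$IOS for the output $(e,u)$ with input $w$, with gain built from $\gamma^w$, $\gamma_1\circ\gamma^w$ and $\gamma_2\circ\gamma^w$. Incremental detectability of each subsystem, with respect to its own input and output, bounds $\delta x_p$ by $(\delta u,\delta w,\delta e)$ and $\delta x_c$ by $(\delta e,\delta u)$. This makes the closed loop incrementally detectable from $(e,u)$. The output map $x\mapsto(h_p(x_p),h_c(x_c))$ is globally Lipschitz by Assumption~\ref{ass:regularity}. Proposition~\ref{prop:dISS-dIOS} therefore yields $\delta$ISS of \eqref{eq:CL-NL}.
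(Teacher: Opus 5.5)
Your proposal is correct and follows the same route as the paper. Both use truncated-horizon small-gain bounds for $\delta e,\delta u$, then a time shift to $\lfloor j/2\rfloor$ with the intermediate increment controlled by incremental detectability plus the uniform bound, then iteration of the strict contraction $\gamma_1\circ\gamma_2$ to obtain a $\K\L$ transient, and finally closed-loop incremental detectability combined with Proposition~\ref{prop:dISS-dIOS}. The only difference is packaging: the paper makes your Step 2 explicit through Lemma~\ref{lem:technical}, a max-form version of the Jiang--Teel--Praly Lemma A.1, using shifts $\lfloor j/2\rfloor$ and $\lfloor j/4\rfloor$ and applying it to a zero-order-held $\delta e$, rather than deferring to the discrete-time non-incremental proof.
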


    The proof of Theorem~\ref{thm:small-gain} can be found in Appendix~\ref{app:proof-small-gain}.

    \begin{remark}
        Note that one can easily pass from the formulation of $\delta$IOS based on a summation, as in \eqref{eq:dIOS}, to one based on the maximum, as in \eqref{eq:dIOS-particular}, by using the fact that, for any pair of non-negative numbers, the following inequality holds for any $\rho \in \K_\infty$:
        \begin{align*}
            a + b \leq \max \left\{ \left( \mathrm{Id} + \rho \right) (a), \left(\mathrm{Id}+\rho^{-1} \right)(b) \right\}.
        \end{align*}
        This can be easily seen by considering the two cases $\rho(a) \leq b$ and $\rho(a) > b$.
        In particular, \eqref{eq:dIOS-particular}-\eqref{eq:small-gain} are satisfied if there exist functions $\hat \beta_i \in \K\L$ and $\hat \gamma_i, \hat \gamma^w, \rho_i, \in \K_\infty$, $i \in \{1,2\}$, such that
        \begin{align}
            \label{eq:dIOS-particular-sum}
            &\begin{cases}
                |\delta e(j)| \leq \hat \beta_1(|\delta \xi_p|,j) + \hat \gamma_1(\|\delta u\|_\infty) + \hat \gamma^w(\|\delta w\|_\infty), \\
                |\delta u(j)| \leq \hat \beta_2(|\delta \xi_c|,j) + \hat \gamma_2(\|\delta e\|_\infty),
            \end{cases} \\
            \label{eq:small-gain-sum}
            &\begin{cases}
                (\mathrm{Id} + \rho_1) \circ \hat \gamma_1 \circ (\mathrm{Id} + \rho_2) \circ \hat \gamma_2(s) < s, \\
                (\mathrm{Id} + \rho_2) \circ \hat \gamma_2 \circ (\mathrm{Id} + \rho_1) \circ \hat \gamma_1(s) < s,
            \end{cases} \mbox{for all } s > 0.
        \end{align}
    \end{remark}

\section{Closed-loop $\delta$ISS Lyapunov function approach}
\label{sec:PW-dynamics}

Since a projection is the result of a constrained (parametric) quadratic minimization problem (see Definition~\ref{def:projection}), it is often possible to express $\Pi^P_\S(f_c(x,w))$ as a complementarity system (using the Karush-Kuhn-Tucker sufficient conditions for optimality), or as an explicit (piecewise-smooth) function of the state $x$ and the input $w$.
Examples based on complementarity models can be found in, e.g., \cite{Heemels2000Projected}, but we are not aware of any result in the literature that analyzes the incremental stability properties of discrete-time complementarity systems (notably, the recent work \cite{Raghunathan2025Stability} focuses on stability of such systems, but not incremental stability).
Examples of the explicit formulation as piecewise-smooth dynamics can be found, e.g., in \cite{Sharif2024Analysis} and in the example at the end of this section, but also appear in explicit model predictive control using parametric (quadratic) programming \cite{Bemporad2002OnHybrid}.
For the incremental stability analysis of this type of dynamics, we can take inspiration from works such as \cite{Pavlov2008Convergent}. Therefore, we choose the piecewise-smooth formalism for the analysis in this section.

In particular, we consider the case where the closed-loop dynamics \eqref{eq:CL-NL} can be written as a piecewise-smooth system with input-affine dynamics, namely
\begin{align}
    \label{eq:input-affine}
    \begin{split}
        x = F_i(x^-, w^-) := f_i(x^-) + g_i(x^-) w^-, \quad (x^-, w^-) \in \Omega_i, \\ \; i \in \{1, \ldots, q\},
    \end{split}
\end{align}
where $x \in \real^n$, with $n = n_p + n_c$, $w \in \real^m$.
We assume that the sets $\Omega_i, \; i \in \{1, \ldots, q\},$ are closed, and that they form a partition of $\real^{n+m}$, i.e., $\bigcup_{i \in \{1, \ldots, q\}} \Omega_i = \real^{n+m}$ and $\mathrm{int}\left(\Omega_i\right) \cap \mathrm{int}\left(\Omega_j\right) = \emptyset$ for all $i,j \in \{1, \ldots, q\}, i \neq j$.
Moreover, the functions $f_i, g_i$ are defined on an open set containing $\Omega_i$, they are $\mathcal{C}^1$ on their domain, and if $(x,w) \in \Omega_i \cap \Omega_j$ then $F_i(x,w) = F_j(x,w)$.
Hence, the right-hand side of \eqref{eq:input-affine} is continuous and piecewise-smooth.
Moreover, we make the following assumption on the regularity of the sets $\Omega_i$.
\begin{assumption}
    \label{ass:Omega_i}
    For all $p^a, p^b \in \real^{n+m}$, there is a finite number $\ell$ of real numbers $0 = \lambda_1 < \lambda_2 < \ldots < \lambda_\ell = 1$, defining the same number of points $p_k = p^a + \lambda_k \left(p^b - p^a\right), \; k \in \{1,2,\ldots,\ell\}$, such that, for all $k \in \{1,2,\ldots,\ell-1\}$, the segment $\overline{p_k p_{k+1}} \subset \Omega_{h(k)}$, for some $h(k) \in \{1,2,\ldots,q\}$.
\end{assumption}
The purpose of Assumption~\ref{ass:Omega_i} is to exclude pathological cases, where a segment between two points would intersect the boundaries between the sets $\Omega_i$ infinitely many times. One example where Assumption~\ref{ass:Omega_i} would not hold is $\Omega_1 := \{x \in \real : \sin(1/x) \geq 0\}$ and $\Omega_2 := \{x \in \real : \sin(1/x) \leq 0 \}$.

The following theorem allows establishing $\delta$ISS of the system written in the form \eqref{eq:input-affine} by only checking $\delta$ISS of the individual modes.
This result can be seen as an extension of the discrete-time Demidovi\v{c} condition \cite[Theorem 14]{Tran2018Convergence} to the case of input-affine piecewise-smooth systems with continuous dynamics, where the proof is inspired by \cite[Theorem 2]{Pavlov2008Convergent}.

\begin{thm}
    Suppose that there exists a positive scalar $M_g < \infty$ such that $\|g_i(x)\| \leq M_g$, for all $x \in \real^n$ and all $i \in \{1, 2, \ldots, q\}$.
    Under Assumption~\ref{ass:Omega_i}, if there exist a positive definite matrix $Q \in \real^{n \times n}$ and a scalar $\alpha \in (0,1)$ such that the matrix
    \begin{align}
        \label{eq:Demidovic-like}
        \frac{\partial}{\partial x} F_i(x,w)^\top Q \frac{\partial}{\partial x} F_i(x,w) - \alpha^2 Q,
    \end{align}
    is negative semidefinite for all $(x, w) \in \Omega_i$ and for all $i \in \{1, \ldots, q\}$, then system \eqref{eq:input-affine} is $\delta$ISS.
\end{thm}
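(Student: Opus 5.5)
The plan is to show that $V(x_1,x_2) := \|x_1 - x_2\|_Q$ is a $\delta$ISS Lyapunov function for \eqref{eq:input-affine} with linear decay and linear input gain. Concretely, we aim for
\begin{align*}
    \|F(x_1,w_1) - F(x_2,w_2)\|_Q \leq \alpha \|x_1 - x_2\|_Q + c\,|w_1 - w_2|,
\end{align*}
where $F$ denotes the (continuous) right-hand side of \eqref{eq:input-affine} and $c$ is a constant. Iterating this bound backwards, exactly as in the FOPE example, yields \eqref{eq:dISS} with $\beta(r,j) = \sqrt{\kappa(Q)}\,\alpha^j r$ and linear gain $\sqrt{\kappa(Q)}\, c/(\lambda_m(Q)^{1/2}(1-\alpha))$ up to norm-equivalence constants. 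Equivalently, we may invoke \cite[Theorem 8]{Tran2016Incremental}, since $F$ is continuous and we may assume $F(0,0)=0$ or simply argue directly.

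The first step is to split the increment using the intermediate point $(x_2,w_1)$:
\begin{align*}
    \|F(x_1,w_1)-F(x_2,w_2)\|_Q \leq \|F(x_1,w_1)-F(x_2,w_1)\|_Q + \|F(x_2,w_1)-F(x_2,w_2)\|_Q .
\end{align*}
For the second term, we use Assumption~\ref{ass:Omega_i} on the segment from $(x_2,w_2)$ to $(x_2,w_1)$, which has constant $x$-component. On each subsegment $\overline{p_kp_{k+1}} \subset \Omega_{h(k)}$, the map is $f_{h(k)}(x_2) + g_{h(k)}(x_2)w$, so the increment there equals $g_{h(k)}(x_2)(w_{k+1}-w_k)$. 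At the breakpoints the modes agree by continuity, so we can telescope, and the sum of the subsegment lengths equals $|w_1-w_2|$. This gives a bound of $\sqrt{\lambda_M(Q)}\,M_g\,|w_1-w_2|$.

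The first term is the core step, and it follows the argument of \cite[Theorem 2]{Pavlov2008Convergent}. Fix $w=w_1$ and consider the segment $p(\lambda) = (x_2 + \lambda(x_1-x_2), w)$, $\lambda\in[0,1]$. By Assumption~\ref{ass:Omega_i}, it splits into finitely many subsegments, each contained in a single $\Omega_{h(k)}$. On each one, $F_{h(k)}$ is $\mathcal C^1$ on an open neighbourhood of $\Omega_{h(k)}$, so the mean-value theorem in integral form gives
\begin{align*}
    F(p_{k+1}) - F(p_k) = \int_{\lambda_k}^{\lambda_{k+1}} \frac{\partial F_{h(k)}}{\partial x}(p(\lambda))\,(x_1-x_2)\,d\lambda .
\end{align*}
Condition \eqref{eq:Demidovic-like} means precisely that $\|\frac{\partial F_i}{\partial x}\,v\|_Q \leq \alpha \|v\|_Q$ for all $v$ at points of $\Omega_i$. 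Integrating this yields $\|F(p_{k+1})-F(p_k)\|_Q \leq \alpha(\lambda_{k+1}-\lambda_k)\|x_1-x_2\|_Q$. Telescoping and using that $F$ is single-valued at the shared breakpoints gives the $\alpha\|x_1-x_2\|_Q$ bound.

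I expect the main obstacle to be justifying the mean-value step at the endpoints of each subsegment. A subsegment may lie on the boundary of $\Omega_{h(k)}$, but since the Jacobian condition holds on the whole closed set $\Omega_i$ and $F_i$ is $\mathcal C^1$ on an open superset, the integral formula still holds. Continuity across $\Omega_i\cap\Omega_j$ is what makes the telescoping legitimate. Combining the two bounds then gives the desired inequality with $c=\sqrt{\lambda_M(Q)}M_g$.
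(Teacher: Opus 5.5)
Your proof is correct and uses the paper's ingredients. These are the Lyapunov function $\|x_1-x_2\|_Q$, Assumption~\ref{ass:Omega_i} to cut a segment into finitely many pieces each lying in one cell $\Omega_{h(k)}$, a per-piece bound from \eqref{eq:Demidovic-like} and $\|g_i\|\le M_g$, and telescoping via continuity across cells; you also reach the same constant $\sigma=\sqrt{\lambda_M(Q)}M_g$.

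The difference is the path. The paper works on the single segment from $(x_1,w_1)$ to $(x_2,w_2)$ in $\real^{n+m}$. On each piece it applies the scalar mean-value theorem to $\Phi(s)=(\delta F_i)^\top Q F_i(\cdot)$ plus a Cauchy--Schwarz step, bounding the joint increment $\frac{\partial}{\partial x}F_i(\bar x,\bar w)\delta x+g_i(\bar x)\delta w$ in one go. You instead route through $(x_2,w_1)$ and treat the two legs separately. This costs a second application of Assumption~\ref{ass:Omega_i}, but each leg is simpler. On the $w$-leg no calculus is needed, since with $x$ frozen every mode is exactly affine in $w$. On the $x$-leg, the integral form of the mean-value theorem replaces the $\Phi$ device.

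Your argument also covers $i=j$ and $i\neq j$ uniformly. The paper's same-mode step evaluates the Jacobian at an intermediate point $(\bar x,\bar w)$, which is only guaranteed to lie in $\Omega_i$ when the connecting segment does. That holds for the subsegments in the cross-mode step, but not for two arbitrary points of a nonconvex $\Omega_i$.

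Keep the direct iteration of $V(j)\le\alpha V(j-1)+c\,|\delta w(j-1)|$ as your conclusion. It gives an explicit exponential $\beta$ and a linear gain, and needs nothing beyond the inequality you established. Drop the aside that one may assume $F(0,0)=0$: this is not among the hypotheses, and it fails, e.g., for \eqref{eq:PWL} with nonzero offsets $b_i$.
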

\begin{proof}
    \proofstep{From \eqref{eq:Demidovic-like} to $\delta$ISS of the individual mode:}
    This part of the proof adapts the arguments in the proof of \cite[Theorem 14]{Tran2018Convergence} to the case of input-affine, time-invariant dynamics.

    Let us denote, for brevity, $\delta x := x_1 - x_2$, $\delta w := w_1 - w_2$ and $\delta F_i := F_i(x_1,w_1) - F_i(x_2,w_2)$.
    Given two points $(x_1, w_1), (x_2, w_2) \in \Omega_i$, for some $i \in \{1, \ldots, q\}$, define the function $\Phi: [0,1] \to \real$ as follows:
    \begin{align}
        \Phi(s) := (\delta F_i)^\top Q F_i\left(s x_1 + (1-s) x_2, s w_1 + (1-s) w_2 \right),
    \end{align}
    and notice that
    \begin{align}
        \label{eq:demidovic-step1}
        \Phi(1) - \Phi(0) = (\delta F_i)^\top Q \delta F_i.
    \end{align}
    Due to the mean-value theorem, there exists $\bar s \in [0,1]$ such that
    \begin{align}
        \label{eq:demidovic-step2}
        \Phi(1) - \Phi(0) &= \frac{d}{ds} \Phi(\bar s) \\
        &= (\delta F_i)^\top Q \left[ \frac{\partial}{\partial x} F_i(\bar x, \bar w) \delta x + g_i(\bar x) \delta w \right], \nonumber
    \end{align}
    where we denoted $(\bar x, \bar w) := \bar s (x_1, w_1) + (1 - \bar s) (x_2, w_2)$.
    Equating the right-hand sides of \eqref{eq:demidovic-step1} and \eqref{eq:demidovic-step2}, and using the fact that, for any positive definite matrix, $$a^\top P a = a^\top P b \implies a^\top P a \leq b^\top P b,$$
    we obtain
    \begin{align}
        \label{eq:dISS-individual}
        \|\delta F_i\|_Q &\leq 
        \left\| \frac{\partial}{\partial x} F_i(\bar x, \bar w) \delta x + g_i(\bar x) \delta w \right\|_Q \nonumber \\
        &\leq \left\| \frac{\partial}{\partial x} F_i(\bar x, \bar w) \delta x \right\|_Q + \left\|g_i(\bar x) \delta w \right\|_Q \\
        &\leq \alpha \left\| \delta x \right\|_Q + \sqrt{\lambda_M(Q)} M_g |\delta w| =: \alpha \left\| \delta x \right\|_Q + \sigma |\delta w|, \nonumber
    \end{align}
    where the last inequality comes from \eqref{eq:Demidovic-like} and the uniform boundedness of the functions $g_i$.
    This proves that the individual mode dynamics satisfy a $\delta$ISS inequality.

    \proofstep{From the same mode to different modes:}
    Here we extend the proof of \cite[Theorem 2]{Pavlov2008Convergent}.
    Select $i,j \in \{1, \ldots, q\}$ such that $(x_1, w_1) \in \Omega_i$ and $(x_2, w_2) \in \Omega_j$.
    We only consider the case where $i \neq j$, as $i=j$ is considered in the first part of the proof.

    We consider the points $p^a := (x_1, w_1)$, $p^b := (x_2, w_2)$ and, as per Assumption~\ref{ass:Omega_i}, we identify a finite number $\ell \in \mathbb{N}_{>2}$ of points $p_k := (y_k, v_k), k \in \{1, 2, \ldots, \ell\}$, such that $p_1 = p^a$, $p_\ell = p^b$, and, for all $k \in \{1,2,\ldots,\ell-1\}$, $\overline{p_k p_{k+1}} \subset \Omega_{h(k)}$, for some $h(k) \in \{1,2,\ldots,q\}$.

    Due to the continuity of the dynamics, we have that $F_{h_k}(y_k, v_k) = F_{h_{k+1}}(y_k, v_k)$ for all $k \in \{1, \ldots, \ell-1\}$.
    Using this fact, the triangle inequality and \eqref{eq:dISS-individual}, we can write
        \begin{align}
            \label{eq:PWL-ISS-step1}
            \|F_i(x_1,w_1) - F_j(x_2,w_2)\|_Q
            &= \left\| \sum_{k = 1}^{\ell-1} F_{h_k}(y_{k+1}, v_{k+1}) - F_{h_k}(y_k, v_k) \right\|_Q \nonumber \\
            &\leq \sum_{k = 1}^{\ell-1} \left\| F_{h_k}(y_{k+1}, v_{k+1}) - F_{h_k}(y_k, v_k) \right\|_Q \nonumber \\
            &\leq \sum_{k = 1}^{\ell-1} \alpha \| y_{k+1} - y_k \|_Q + \sigma \left| v_{k+1} - v_k \right|.
        \end{align}
        We can now use the fact that $(y_k, v_k), \; k \in \{1, \ldots, \ell\},$ lie on the same line, and consequently
        \begin{align*}
            \sum_{k = 1}^{\ell-1} \alpha \| y_{k+1} - y_k \|_Q + \sigma \left| v_{k+1} - v_k \right| = \alpha \left\|x_1 - x_2 \right\|_Q + \sigma \left\| w_1 - w_2 \right\|,
        \end{align*}
        to conclude that
        \begin{align*}
            \|F_i(x_1,w_1) - F_j(x_2,w_2)\|_Q \leq \alpha\left\|x_1 - x_2 \right\|_Q + \sigma \left| w_1 - w_2 \right|,
        \end{align*}
        which shows that $V(x_1, x_2) := \|x_1 - x_2\|_Q$ is a $\delta$ISS Lyapunov function for the system \eqref{eq:input-affine} and therefore, due to \cite[Theorem 24]{Tran2016Incremental}, the system is $\delta$ISS.
\end{proof}

If the functions $f_i$ in \eqref{eq:input-affine} are linear and $g_i$ are constant matrices, then \eqref{eq:input-affine} corresponds to a piecewise-affine system
\begin{align}
    \label{eq:PWL}
    x = A_i x^- + B_i w^- + b_i, \quad (x^-, w^-) \in \Omega_i, \; i \in \{1, \ldots, q\}.
\end{align}
In this case, uniform boundedness of the functions $g_i$ is trivially satisfied, and \eqref{eq:Demidovic-like} simplifies to a system of linear matrix inequalities, as specified in the next corollary.
\begin{corollary}
    \label{cor:LMIs}
    If there exist a positive definite matrix $Q \in \real^{n \times n}$ and a scalar $\alpha$ such that
    \begin{align}
        \label{eq:LMIs}
        A_i^\top Q A_i \leq \alpha Q,
    \end{align}
    for all $i \in \{1, \ldots, q\}$, then system \eqref{eq:PWL} is $\delta$ISS.
\end{corollary}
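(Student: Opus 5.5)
The corollary will follow from the preceding theorem, specialized to the piecewise-affine system \eqref{eq:PWL}. In each mode we have $F_i(x,w) = A_i x + B_i w + b_i$, so $f_i(x) = A_i x + b_i$ is $\mathcal C^1$ (indeed affine) on an open set containing $\Omega_i$. Likewise $g_i(x) \equiv B_i$ is constant, and the uniform bound $\|g_i(x)\| \leq M_g := \max_i \|B_i\|$ holds trivially. Continuity of the right-hand side across the boundaries $\Omega_i \cap \Omega_j$ and the regularity of the partition in Assumption~\ref{ass:Omega_i} are inherited from the standing assumptions on \eqref{eq:input-affine}. If the $\Omega_i$ are polyhedral, as in the explicit projection examples, Assumption~\ref{ass:Omega_i} holds automatically: a segment meets each of finitely many hyperplanes either in at most one point or entirely.

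The key step is to match the LMI \eqref{eq:LMIs} with the Demidovi\v{c}-like condition \eqref{eq:Demidovic-like}. Here $\frac{\partial}{\partial x} F_i(x,w) = A_i$ is constant, so the matrix in \eqref{eq:Demidovic-like} reads $A_i^\top Q A_i - \alpha_{\rm th}^2 Q$, uniformly over $\Omega_i$, where $\alpha_{\rm th} \in (0,1)$ denotes the constant required by the theorem. Reading \eqref{eq:LMIs} with the implicit requirement $\alpha \in [0,1)$, which is needed for the statement to be meaningful since $\alpha \geq 1$ would allow e.g.\ $A_i = I$, I will set $\alpha_{\rm th} := \sqrt{\alpha}$ if $\alpha > 0$. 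If $\alpha = 0$, I will take any $\alpha_{\rm th} \in (0,1)$, since $A_i^\top Q A_i \leq 0 \leq \alpha_{\rm th}^2 Q$. With this choice, $A_i^\top Q A_i - \alpha_{\rm th}^2 Q \leq 0$ for all $i$ and all $(x,w) \in \Omega_i$, and the theorem yields $\delta$ISS.

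Alternatively, one can give a self-contained argument avoiding the mean-value step. Within one mode, $\delta F_i = A_i \delta x + B_i \delta w$, so $\|\delta F_i\|_Q \leq \sqrt{\alpha}\|\delta x\|_Q + \sqrt{\lambda_M(Q)}\,\|B_i\|\,|\delta w|$ directly from \eqref{eq:LMIs}. The chaining along the segment via Assumption~\ref{ass:Omega_i} then proceeds exactly as in the second half of the previous proof. This establishes that $\|x_1-x_2\|_Q$ is a $\delta$ISS Lyapunov function.

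The only real obstacle is interpretive. The statement writes $\alpha$ rather than $\alpha^2$ and does not restrict its range, so I will make explicit that $\alpha<1$ is intended and that the square-root rescaling reconciles the two conditions. Everything else is immediate.
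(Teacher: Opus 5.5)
Your proposal is correct and follows the paper's implicit argument: the corollary is the preceding Demidovi\v{c}-type theorem with $\frac{\partial}{\partial x}F_i = A_i$ and $g_i \equiv B_i$ constant (hence trivially bounded), so that \eqref{eq:Demidovic-like} collapses to the LMIs. Your explicit repair of the statement, namely requiring $\alpha<1$ (without which $A_i = I$ is a counterexample) and matching the LMI's $\alpha$ to the theorem's $\alpha^2$ via $\sqrt{\alpha}$, is exactly what is needed, and the paper leaves it unstated.
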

Note that Corollary~\ref{cor:LMIs} has \cite[Theorem 2]{Pavlov2008Convergent} as a special case and extends it by allowing for switching also based on the external input.

\begin{example}{Linear plant + FOPE as a PWL system}
    We show now how we can obtain a system of the form \eqref{eq:PWL} from the interconnection of a linear plant
    \begin{align}
        \label{eq:linear_plant}
        \begin{split}
            x_p &= A x_p^- + B u^- + D w^-, \\
            e &= -C x_p,
        \end{split}
    \end{align}
    and the FOPE in \eqref{eq:FOPE}-\eqref{eq:Se-FOPE}.

    First, note that we can characterize the right-hand side of \eqref{eq:FOPE} as an input-and-state-dependent switched map.
    By using the shorthand notation $v := a u^- + b e$, such map is given by
    \begin{align}
        \label{eq:PWL_dyn}
        u = \Pi_{\K(e)}(v) = g(v,e) := \begin{cases}
            v, &\mbox{when } (v,e) \in \L, \\
            k_1 e, &\mbox{when } (v,e) \in \P_1, \\
            k_2 e, &\mbox{when } (v,e) \in \P_2,
        \end{cases}
    \end{align}
    where
    \begin{align}
        \label{eq:P_12}
        \begin{split}
            \L & := \left\{ (v,e) \in \real^2 : v \in \S(e) \right\}, \\
            \P_1 &:= \left\{ (v,e) \in \real^2 : v e \leq k_1 e^2 \right\}, \\
            \P_2 &:= \left\{ (v,e) \in \real^2 : v e \geq k_2 e^2 \right\}.
        \end{split}
    \end{align}
    Note that $\L$, $\P_1$, and $\P_2$ form a partition of $\real^2$.
    A graphical representation of this partition is reported in Figure~\ref{fig:partition} and the 3D plot of $g$ is depicted in Figure~\ref{fig:f_3D_plot} (notice that it is a continuous, piecewise-linear function).

    \begin{figure}[!tb]
        \centering
        \begin{minipage}[t]{0.47\textwidth}
            \centering
            \includegraphics[width=\linewidth]{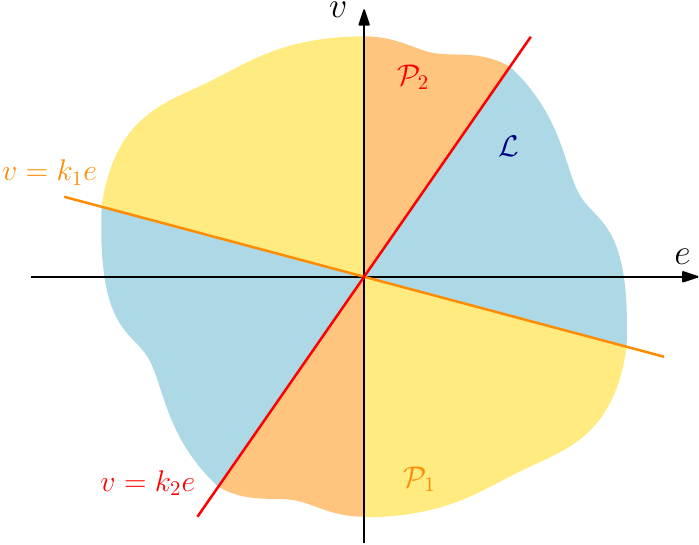}
            \caption{Graphical example of the partition of $\real^2$ into $\L, \P_1, \P_2$ in \eqref{eq:P_12}.}
            \label{fig:partition}
        \end{minipage}\hfill
        \begin{minipage}[t]{0.47\textwidth}
            \centering
            \includegraphics[width=\linewidth]{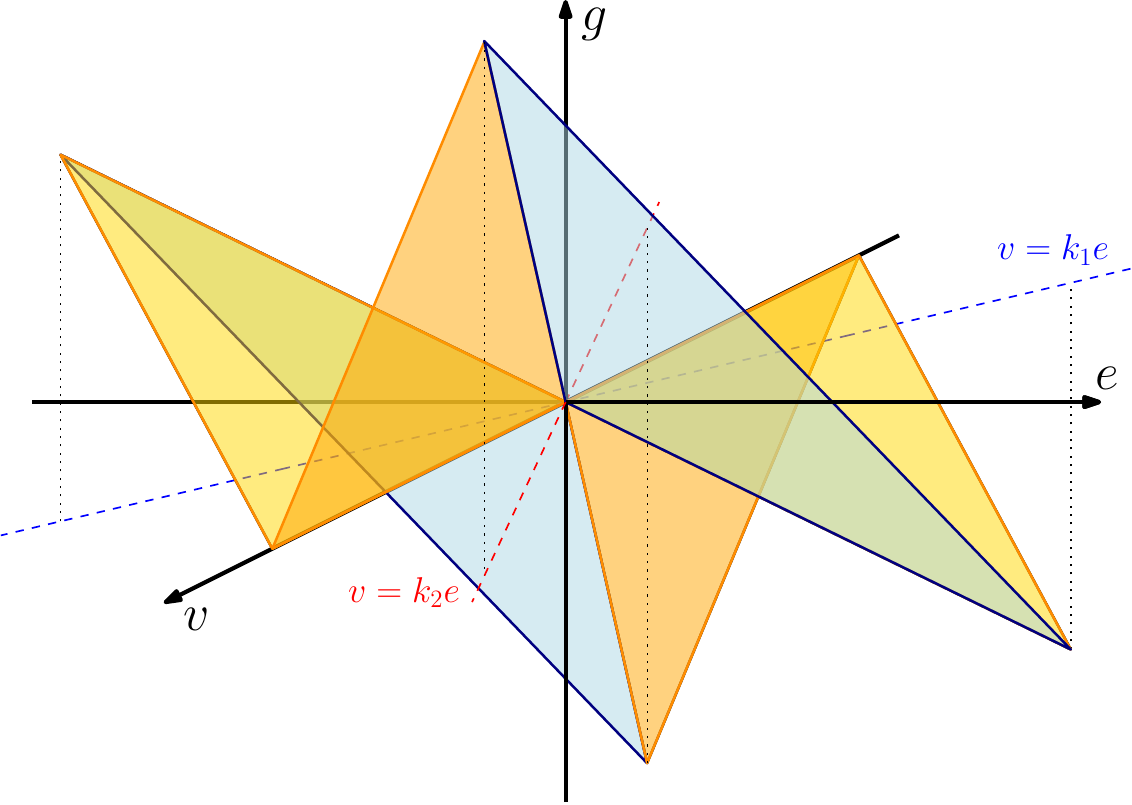}
            \caption{Graphical example of the plot of $g(v,e)$ in \eqref{eq:PWL_dyn}, using the partition in Figure~\ref{fig:partition}.}
            \label{fig:f_3D_plot}
        \end{minipage}
    \end{figure}

    Combining \eqref{eq:linear_plant}-\eqref{eq:P_12}, the closed-loop dynamics can be written in the form \eqref{eq:PWL}, with $q=3$, by defining the following matrices $A_i,B_i$ and sets $\Omega_i$:
    \begin{align}
        \label{eq:PWL-matrices}
        \begin{split}
            A_1 &:= \left[ \begin{array}{c|c}
                A & B \\ \hline -bCA & a - bCB
            \end{array} \right], \quad B_1 := \begin{bmatrix}
                F \\ -b C F
            \end{bmatrix} \\
            A_2 &:= \left[ \begin{array}{c|c}
                A & B \\ \hline -k_1 CA & -k_1 CB
            \end{array} \right], \quad B_2 := \begin{bmatrix}
                F \\ -k_1 C F
            \end{bmatrix} \\
            A_3 &:= \left[ \begin{array}{c|c}
                A & B \\ \hline -k_2 CA & -k_2 CB
            \end{array} \right], \quad B_3 := \begin{bmatrix}
                F \\ -k_2 C F
            \end{bmatrix} \\
            \Omega_1 &:= \left\{ (x^-, w^-) \in \real^{n+1+m} : k_1 e^2 \leq v e \leq k_2 e^2 \right\}, \\
            \Omega_2 &:= \left\{ (x^-, w^-) \in \real^{n+1+m} : v e \leq k_1 e^2 \right\}, \\
            \Omega_3 &:= \left\{ (x^-, w^-) \in \real^{n+1+m} : v e \geq k_2 e^2 \right\}.
        \end{split}
    \end{align}
    Note that, due to the linearity of the maps $(x, w) \mapsto e$ and $(x,w) \mapsto v$, the sets $\Omega_1, \Omega_2, \Omega_3$ are also closed, and they form a partition of $\real^{n+1+m}$.
\end{example}

\section{Numerical simulations}
\label{sec:simulation}

In this section, we show two numerical examples where the proposed approaches can certify convergence of the closed-loop including a FOPE.
In the first example, where the plant is nonlinear, we use the incremental gain of the FOPE, computed in the example in Section~\ref{subsec:inheritance}, to apply the small-gain condition in Theorem~\ref{thm:small-gain}.
In the second example, where the plant is linear, we also use the PWL representation \eqref{eq:PWL}, with the matrices and sets defined in \eqref{eq:PWL-matrices}, to check $\delta$ISS through the LMI condition \eqref{eq:LMIs}.

\subsection{Nonlinear plant}
\label{subsec:example-NL}

We begin by considering the following nonlinear plant with state $x = (x_1,x_2) \in \real^2$:
\begin{align}
    \label{eq:NL-numerical-state}
    x &= A x^- + E \begin{bmatrix}
        \sin \left( x_2^- \right) \\ \sin \left( x_1^- \right)
    \end{bmatrix} + B \mathrm{sat}_{[-1,1]} \left( u^- \right) + D w^- \\
    &:= \begin{bmatrix}
        0.5 & 0.2 \\ 0.1 & 0.4
    \end{bmatrix} x^- + 0.1 \begin{bmatrix}
        \sin \left( x_2^- \right) \\ \sin \left( x_1^- \right)
    \end{bmatrix} + \begin{bmatrix}
        0 \\ 0.1
    \end{bmatrix} \mathrm{sat}_{[-1,1]} \left( u^- \right) + \begin{bmatrix}
        0 \\ 0.1
    \end{bmatrix} w^- \nonumber, \\
    \label{eq:NL-numerical-output}
    e &= -Cx := - \begin{bmatrix}
        1 & 0
    \end{bmatrix} x,
\end{align}
where we also assumed that the control input is subject to physical constraints, modeled through the saturation function
\begin{align*}
    \mathrm{sat}_{[-\bar u, \bar u]} (x) := \max\{ \min \{ x, \bar u \}, -\bar u \}.
\end{align*}

Due to the nonlinearities in \eqref{eq:NL-numerical-state}, it is complicated to obtain an explicit piecewise-differentiable dynamics as in Section~\ref{sec:PW-dynamics}, therefore we will use the small-gain arguments of Theorem~\ref{thm:small-gain} to certify the convergence of the closed loop \eqref{eq:CL-NL} when the plant dynamics are given by \eqref{eq:NL-numerical-state} and the controller is a FOPE.

In order to verify whether the plant \eqref{eq:NL-numerical-state}-\eqref{eq:NL-numerical-output} is $\delta$ISS, we first note that both the sine function and the saturation function are Lipschitz continuous with Lipschitz constant $L=1$, thus
\begin{align*}
    \begin{cases}
        |\sin(x_1) - \sin(x_2)| \leq |x_1 - x_2|, \\
        |\mathrm{sat}_{[-1,1]}(x_1) - \mathrm{sat}_{[-1,1]}(x_2)| \leq |x_1 - x_2|.
    \end{cases}
\end{align*}
Using this fact, denoting $x_i(j) := x(j, \xi_i, (u_i, w_i)), \; i \in \{1,2\}$, we get, for any time $j \in \nat_{>0}$ and for all signals $u_1, u_2: \nat \rightarrow \real$, $w_1, w_2: \nat \rightarrow \real^m$,
\begin{align}
    \label{eq:NL-example-step}
    |\delta x(j)| := \; &|x_1(j) - x_2(j)| \nonumber \\
    \leq \; &|A| |\delta x(j-1)| + 0.1 |\delta x(j-1)| + |B| |\delta u(j-1)| \nonumber \\
    & \hspace{140pt} + |D| |\delta w(j-1)|, \nonumber \\
    \leq \; &\lambda |\delta x(j-1)| + 0.1 |\delta u(j-1)| + 0.1 |\delta w(j-1)|,
\end{align}
where we denoted all the incremental quantities with the $\delta$ prefix, for brevity, and we defined $\lambda := |A| + 0.1 = 0.7109$.
Iterating inequality \eqref{eq:NL-example-step}, we obtain that, for any pair of initial conditions $\xi_1, \xi_2 \in \real^2$,
\begin{align*}
    |\delta x(j)| \leq \lambda^j |\xi_1 - \xi_2| + \frac{0.1}{1-\lambda} \|\delta u\|_\infty + \frac{0.1}{1-\lambda} \|\delta w\|_\infty,
\end{align*}
which shows that \eqref{eq:NL-numerical-state} is $\delta$ISS with linear incremental gain $\gamma := \frac{0.1}{1-\lambda}$.
Since $e = - x_1$, we have $|\delta e| \leq |\delta x|$, so \eqref{eq:NL-numerical-state} is also $\delta$IOS with the same linear incremental gain.

Consequently, \eqref{eq:small-gain-sum} is satisfied by any selection of the FOPE parameters $a,b,k_1,j_2$ that satisfies the inequality
\begin{align*}
    \frac{b + \bar k}{1 - |a|} < \frac{1 - \lambda}{0.1} = 2.8912.
\end{align*}
To showcase the convergence property implied by $\delta$ISS, the closed loop composed by \eqref{eq:NL-numerical-state}-\eqref{eq:NL-numerical-output} and the FOPE was simulated starting from different initial conditions for the plant, $\xi_1 = (1,1)$, $\xi_2 = (-1,-1)$, and the same initial condition $u_1(0) = u_2(0) = 0$ for the FOPE.
The FOPE parameters were selected as $a = 0.8$, $b = 0.15$, $k_1 = 0$, $k_2 = 0.4$, which satisfy the small-gain condition.
The exogenous disturbance $w$ was selected identical for the two simulations, i.e., $w_1(j) = w_2(j) = \sin \left( \frac{2 \pi j}{20} \right)$.

Figure~\ref{fig:NL-plant-sim} compares the evolution of the plant states (top and middle plots) and the FOPE state/output (bottom plot) in the two simulations.
As expected from the uniform convergence property of the system, all the trajectories converge to each other, and the steady state trajectories have the same period as the exogenous input $w(j)$.

\begin{figure}[hbt]
    \centering
    \includegraphics[width=0.60\textwidth]{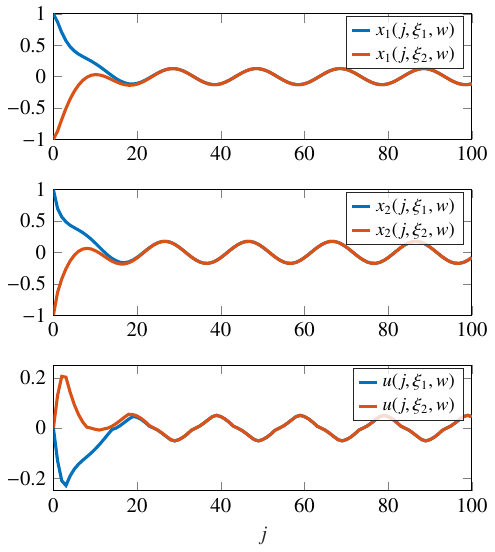}
    \caption{Evolution of the plant states (top and middle), as well as the controller output (bottom), evolving from different initial conditions, for system \eqref{eq:NL-numerical-state}-\eqref{eq:NL-numerical-output} connected in feedback with a FOPE.}
    \label{fig:NL-plant-sim}
\end{figure}

\subsection{Linear plant}
\label{subsec:example-linear}

For the second numerical example, we consider the same system simulated (in continuous time) in \cite[Section 5]{vdEijnden2023ASmall-Gain}.
The example considered there consisted of a second-order continuous-time LTI plant
\begin{align}
    \label{eq:plant_tf}
    \P(s) = \frac{1}{s^2 + \beta_0 \omega_0 s + \omega_0^2},
\end{align}
connected in feedback to an LTI controller $\mathcal C$
\begin{align}
    \label{eq:controller_tf}
    \mathcal C(s) = k_p \left( \frac{s + \omega_i}{s} \right) \left( \frac{\omega_{lp}^2}{s^2 + \beta_{lp} \omega_{lp} s + \omega_{lp}^2} \right).
\end{align}
A system like \eqref{eq:plant_tf} can arise in microelectromechanical nanopositioning applications \cite{Shi2022ANegative}, used in the lithography industry, and is thus of interest from a practical point of view.

In order to parallel the simulation results in \cite[Section 5]{vdEijnden2023ASmall-Gain}, where the incremental stability of a projection-based control loop is shown for the continuous-time case, we select the same numerical values for the plant parameters, namely $\omega_0 = 54 \cdot 2 \pi \; \mathrm{rad/s}$ and $\beta_0 = 0.009$, and the controller parameters, namely $k_p = 7 \cdot 10^4 \; \mathrm{N/m}$, $\omega_i = 4.75 \cdot 2 \pi \; \mathrm{rad/s}$, $\omega_{lp} = 6.5 \cdot 2 \pi \; \mathrm{rad/s}$, $\beta_{lp} = 0.8$.

Through exact discretization of the feedback interconnection of \eqref{eq:plant_tf}-\eqref{eq:controller_tf}, using a sampling time $T = 1 \; \mathrm{ms}$, we obtain the matrices $A,B,C,D$ in \eqref{eq:linear_plant}.

As the plant is linear, both the small-gain approach of Section~\ref{subsec:small-gain} and the LMI approach of Corollary~\ref{cor:LMIs} can be applied to this example.
We compare the two approaches below.

\paragraph{Small-gain approach}
An exponentially stable linear plant, like the one considered in this example, is also incrementally ISS and IOS with linear gains.
Specifically, the incremental gain can be identified as follows:
\begin{align*}
    \delta y(j) &= C A \delta x(j-1) + C B \delta u(j-1) + C D \delta w(j-1) \\
    &= C A^2 \delta x(j-2) + C A B \delta u(j-2) + C A D \delta w(j-1) \\ & \hspace{50pt} + C B \delta u(j-1) + C D \delta w(j-1) \\
    &= \ldots \\
    &= CA^j \delta \xi + \sum_{k=0}^j C A^k B \delta u(j-k) + \sum_{k=0}^j C A^k D \delta w(j-k).
\end{align*}
This yields the first inequality of \eqref{eq:dIOS-particular-sum}, with linear gain $\gamma_p^u = \|\P_\mathrm{dt}\|_{\ell_1} := \sum_{k = 0}^{\infty} C A^k B$, which is also known as the $\ell_1$-norm of the linear plant.
By means of numerical computations, we found the $\ell_1$-norm of the discretized plant to be $\|\P_\mathrm{dt}\|_{\ell_1} = 1.4191$.
Consequently, any set of the FOPE parameters that satisfies the inequality
\begin{align}
    \label{eq:small-gain-linear}
    \frac{b + \bar k}{1 - |a|} < \frac{1}{\|\P_\mathrm{dt}\|_{\ell_1}}
\end{align}
also satisfies the small-gain condition \eqref{eq:small-gain-sum}, thus it
ensures that the closed-loop system is $\delta$ISS and, consequently, convergent.

\paragraph{LMI approach}
After selecting the FOPE parameters, we can obtain the numerical values of the matrices $A_i, i \in \{1,2,3\},$ in \eqref{eq:PWL-matrices}, and use a numerical LMI solver, such as YALMIP \cite{Yalmip}, to look for a matrix $Q$ that solves the system of inequalities \eqref{eq:LMIs}.
If the solver identifies a feasible solution for the selected set of FOPE parameters, then the corresponding closed-loop system is $\delta$ISS and, consequently, convergent.
\begin{figure}[!tb]
    \centering
    \begin{minipage}[t]{0.47\textwidth}
        \centering
        \includegraphics[width=\linewidth]{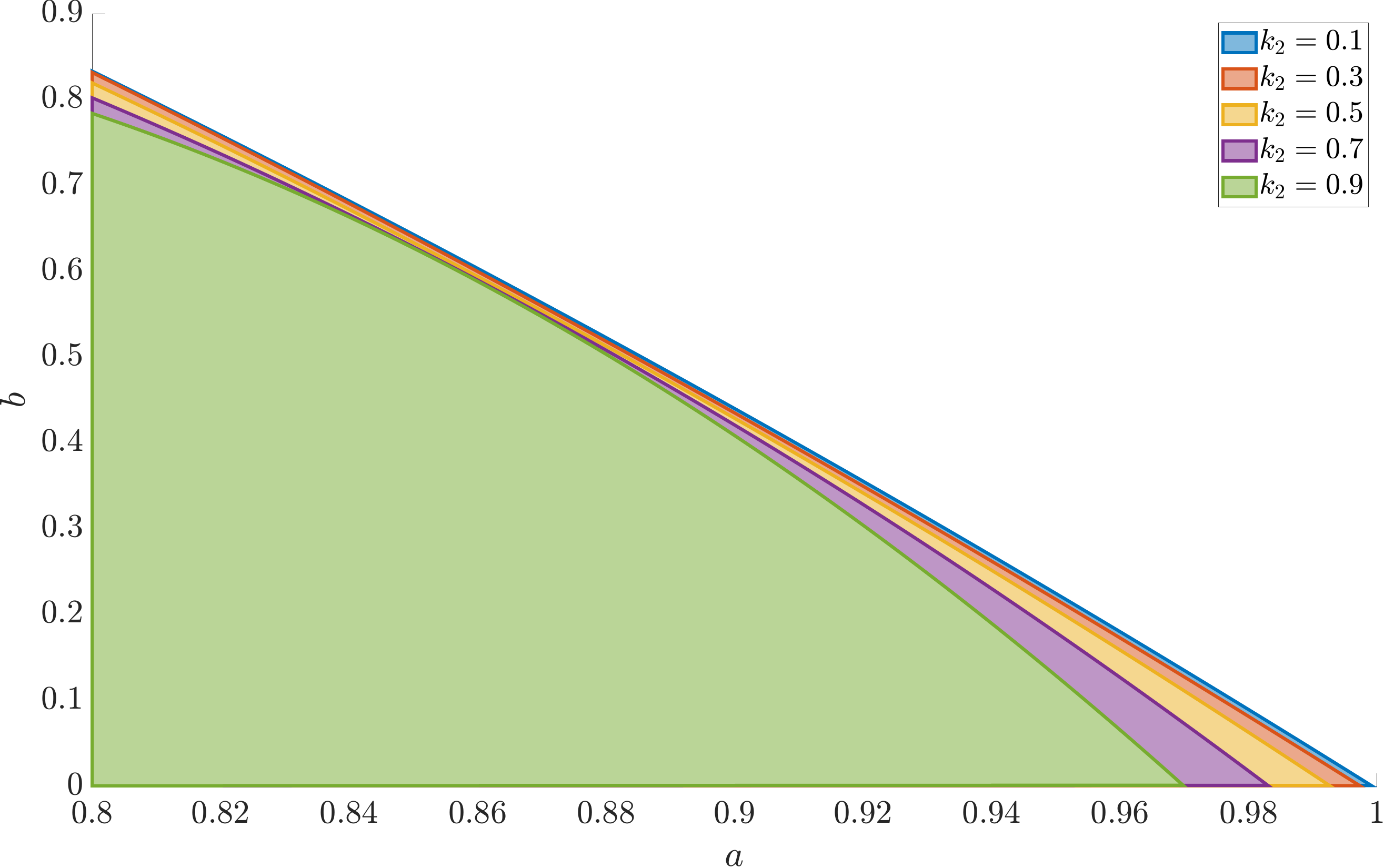}
        \caption{Feasibility of the system of LMIs \eqref{eq:LMIs} for the numerical example in Section~\ref{subsec:example-linear}, for $k_1 = 0$ and different values of $a$ (abscissas), $b$ (ordinates), $k_2$ (region color).}
        \label{fig:feasibility_common_P}
    \end{minipage}\hfill
    \begin{minipage}[t]{0.47\textwidth}
        \centering
        \includegraphics[width=\linewidth]{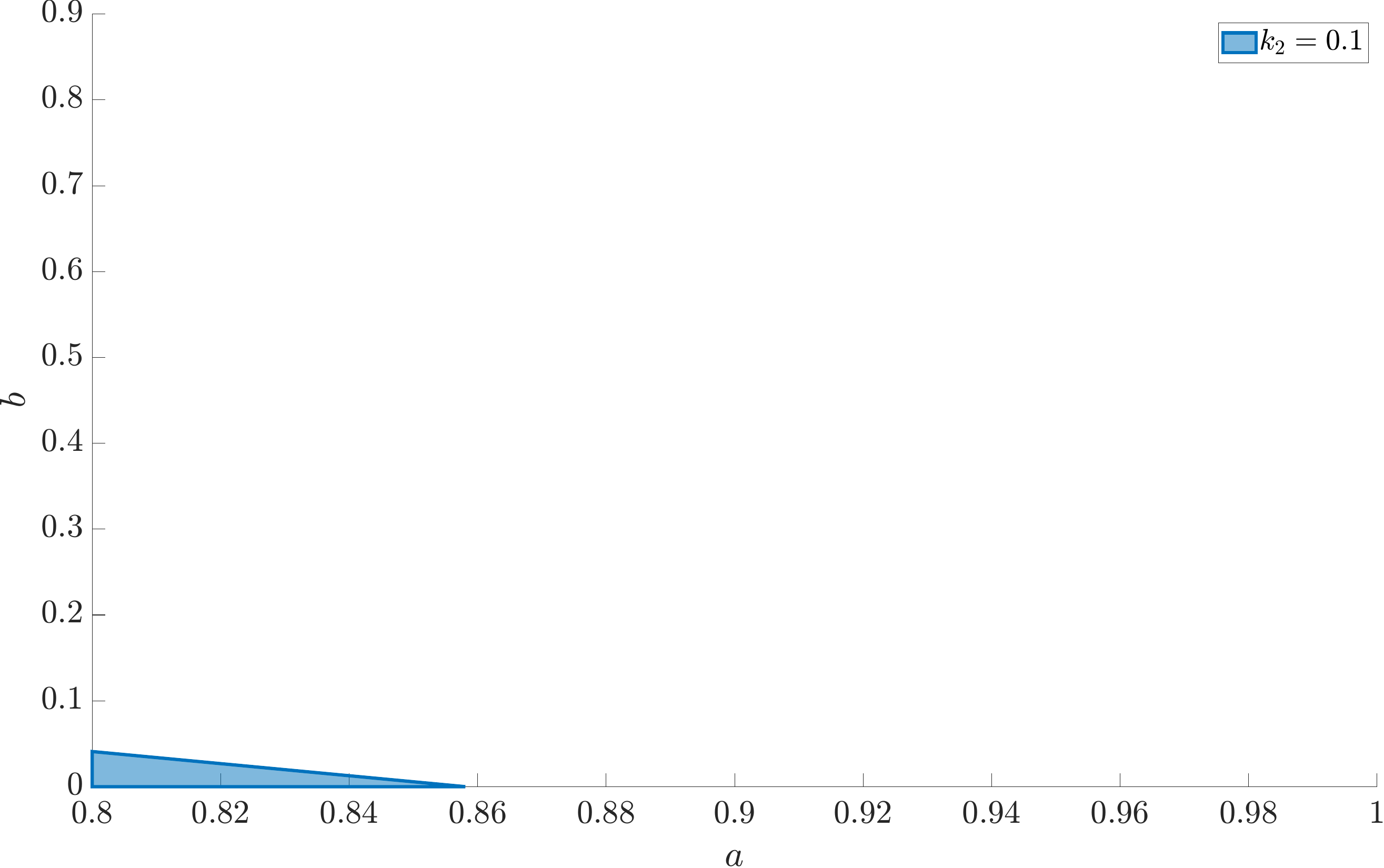}
        \caption{Set of parameters that verify the small-gain condition \eqref{eq:small-gain-linear} for the numerical example in Section~\ref{subsec:example-linear}, for $k_1 = 0$ and different values of $a$ (abscissas), $b$ (ordinates). Out of the 5 selected values of $k_2$, only the lowest produced a feasible region in the interval considered.}
        \label{fig:feasibility_small_gain}
    \end{minipage}
\end{figure}
\begin{figure}[htb]
    \centering
    \includegraphics[width=0.60\textwidth]{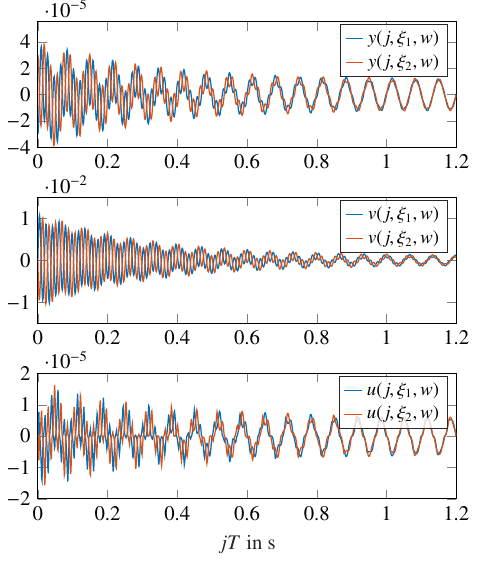}
    \caption{Evolution of the plant position (top) and velocity (middle), as well as the controller output (bottom), evolving from different initial conditions (blue and orange lines, respectively).}
    \label{fig:trajectories}
\end{figure}
\begin{figure}[htb]
    \centering
    \includegraphics[width=0.66\textwidth]{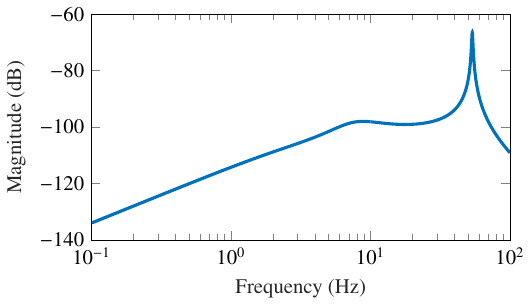}
    \caption{Nonlinear Bode-like plot of the transfer function gain for the closed-loop system \eqref{eq:PWL}.}
    \label{fig:BODE}
\end{figure}
\paragraph{Numerical results}
In order to compare the two approaches, we checked condition \eqref{eq:small-gain-linear} and the feasibility of the system of LMIs \eqref{eq:LMIs} for different values of the FOPE parameters, to check how conservative these numerical guarantees are with respect to each other.
In order to obtain representable results, we chose to keep one parameter constant, i.e. $k_1 = 0$, and swept through different values of the other parameters.
Specifically, we considered $a \in [0.8, 1]$, $b \in [0, 1]$ and $k_2 \in \{0.1, 0.3, 0.5, 0.7, 0.9\}$.
Figures~\ref{fig:feasibility_common_P}-\ref{fig:feasibility_small_gain} show a comparison between the regions of guaranteed $\delta$ISS, in the $a-b$ plane, for different values of $k_2$ (differently colored regions).
It appears evident that the LMI approach significantly reduces the conservatism when we know that the plant has a linear structure.
Given this fact, we selected a set of FOPE parameters that ensured the feasibility of the system of LMIs \eqref{eq:LMIs}, namely $a = 0.95$, $b = 0.1$, $k_1 = 0$, $k_2 = 0.5$, to simulate the evolution of the closed-loop system \eqref{eq:CL-NL} for this numerical example.

Figure~\ref{fig:trajectories} depicts the evolution of the plant position (top plot) and velocity (middle plot), as well as the controller output (bottom plot), evolving from different initial conditions $\xi_1, \xi_2$ (blue and orange lines, respectively), subject to the same disturbance $w(j) = \sin \left(15 \cdot \frac{2 \pi j}{1000} \right)$ (corresponding to the sampling of a continuous-time sinusoid with frequency $15 \mathrm{Hz}$).

As expected, after a transient characterized by fast oscillations, all the trajectories converge to each other, in view of the convergence property proved through the LMIs \eqref{eq:LMIs}.
Moreover, as expected, the steady-state evolution of the system trajectories is periodic, with the same frequency as the disturbance $w$.

In light of this last property, we assessed the input-output characteristics of the closed-loop system, by constructing a nonlinear Bode plot.
To this end, we excited the PWL closed-loop system \eqref{eq:PWL} with sinusoidal signals $w(j)$ of different frequencies, ranging from $10^{-1}$Hz to $10^{2}$Hz, obtaining outputs with the same period as the inputs.
Then, for each frequency, we computed the ratio between the root-mean-square value of the error $e$ and the signal $w$.
The results are depicted in Figure~\ref{fig:BODE}.
The possibility to obtain a Bode-like plot of the closed-loop system, even though it includes a nonlinear element such as the FOPE, allows for the objective evaluation of the controller performance (such as disturbance rejection or tracking), or its tuning using, e.g., frequency-domain techniques such as loop shaping.
This is not the focus of this paper, but it will be an interesting objective of future work.

\section{Conclusions}
In this paper, we have analyzed incremental stability and convergence properties of discrete-time projection-based control systems. We have shown that discrete-time projection-based controllers preserve quadratic incremental stability properties of their nominal dynamics, provided that the projection metric is well-designed. This key result has enabled the application of a small-gain theorem, for which we provided a formal proof for the first time, guaranteeing incremental stability of the feedback interconnection of a projection controller and a general nonlinear system. For input-affine piecewise smooth systems, we further developed a direct Lyapunov-based approach that generalizes discrete-time Demidovi\v{c} conditions to more general non-smooth settings. Continuity of the projection-based (closed-loop) dynamics is essential, which can be guaranteed for the projection controllers under suitable continuity conditions of the error-dependent constraint set. The results in this paper shed new light on incremental analysis of projection-based controllers, and open up new avenues for robust performance analysis using ``LTI-inspired'' frequency-domain tools. For future work, we will extend the results toward synthesis of projected control systems.

\smallskip

\textit{Acknowledgments:} We would like to thank Andrew R. Teel for the fruitful discussions and the help with the proof of Theorem~\ref{thm:small-gain}.
Moreover, we would like to thank the reviewers for the constructive feedback, which led us to significantly broaden the scope and improve the quality of the paper.

\FloatBarrier

\small
\bibliographystyle{unsrtnat}
\bibliography{refs}
\normalsize

\newpage

\appendix
    \section{Proofs of technical results}
    \label{app:proof-technical}

    \begin{proofof}{Proposition~\ref{prop:dISS-dIOS}}
        Let us denote $\delta x(j) := x(j, \xi_1, w_1) - x(j, \xi_2, w_2)$, $\delta y(j) := y(j, \xi_1, w_1) - y(j, \xi_2, w_2)$, $\delta w := w_1(j) - w_2(j)$.

        \smallskip

        \proofstep{$\delta$ISS $\implies$ $\delta$IOS + incremental detectability:}
        Incremental detectability is clearly implied by $\delta$ISS, by definition.
        Moreover, due to global Lipschitzness of $H$, there exists $L_Y$ such that, for all pairs of initial conditions $\xi_1, \xi_2 \in \real^{n_x}$ and all uniformly bounded input signals $w_1, w_2 : \nat \to \real^{n_w}$, it holds that\begin{align*}
            |\delta y(j)| &\leq L_Y |\delta x(j)|,
        \end{align*}
        which implies $\delta$IOS, due to the $\delta$ISS property.

        \smallskip

        \proofstep{$\delta$IOS + incremental detectability $\implies$ $\delta$ISS:}
        Let us first recall the \textit{weak triangular inequality} \cite[Equation (6)]{Jiang1994Small-gain}: for any $\alpha \in \K$ and any $\rho \in \K_\infty$, it holds that
        \begin{align}
            \label{eq:weak-triangle}
            \alpha(a + b) \leq \alpha \circ (\mathrm{Id} + \rho)(a) + \alpha \circ \left( \mathrm{Id} + \rho \right) \circ \rho^{-1}(b).
        \end{align}
        Let $\beta, \gamma$ be as in \eqref{eq:dIOS}, and let us rewrite the incremental detectability bound \eqref{eq:dDet} as follows:
        \begin{align}
            \label{eq:dDet-proof}
            |\delta x(j)| \leq \beta^0(|\xi_1 - \xi_2|, j) + \gamma^y \left( \|\delta y\|_\infty \right) + \gamma^w \left( \|\delta w\|_\infty \right).
        \end{align}
        From \eqref{eq:dIOS} we obtain the following upper bound:
        \begin{align*}
            \|\delta y\|_\infty \leq \beta(|\xi_1 - \xi_2|, 0) + \gamma \left( \|\delta w\|_\infty \right),
        \end{align*}
        and using this bound in \eqref{eq:dDet-proof} we get
        \begin{align}
            \label{eq:dISS-proof-bounded-x}
            \|\delta x\|_\infty &\leq \beta^0(|\xi_1 - \xi_2|, 0) + \gamma^y \left( \beta(|\xi_1 - \xi_2|, 0) + \gamma \left( \|\delta w\|_\infty \right) \right) \nonumber \\
            & \hspace{170pt} + \gamma^w \left( \|\delta w\|_\infty \right) \nonumber \\
            &\leq \sigma(|\xi_1 - \xi_2|) + \sigma^w \left( \| \delta w\|_\infty \right),
        \end{align}
        for some $\sigma, \sigma^w \in \K_\infty$, due to \eqref{eq:weak-triangle}.
        Moreover, due to time invariance and causality, \eqref{eq:dDet-proof} implies
        \begin{align}
            \label{eq:dISS-proof-step1}
            |\delta x(j)| \leq \beta^0(|\delta x(\floor{j/2})|, \floor{j/2}) + \gamma^y \left( \|\delta y\|_{[\floor{j/2}, j]} \right) + \gamma^w \left( \|\delta w\|_\infty \right).
        \end{align}
        For the incremental output, for every $k \in \{\floor{j/2}, \ldots, j\}$, \eqref{eq:dIOS} implies
        \begin{align}
            \label{eq:dISS-proof-step2}
            |\delta y(k)| &\leq \beta(|\xi_1 - \xi_2|, k) + \gamma \left( \|\delta w\|_\infty \right) \nonumber \\
            &\leq \beta(|\xi_1 - \xi_2|, \floor{j/2}) + \gamma \left( \|\delta w\|_\infty \right) \implies \nonumber \\
            &\hspace{-20pt} \implies \|\delta y(k)\|_{[\floor{j/2}, j]} \leq \beta(|\xi_1 - \xi_2|, \floor{j/2}) + \gamma \left( \|\delta w\|_\infty \right).
        \end{align}
        From \eqref{eq:dISS-proof-step1}, replacing $|\delta x(\floor{j/2})|$ with the upper bound in \eqref{eq:dISS-proof-bounded-x} and replacing $\|\delta y\|_{[\floor{j/2}, j]}$ with the upper bound in \eqref{eq:dISS-proof-step2}, we conclude $\delta$ISS (since a $\K\L$ function in $s,\floor{j/2}$ is also a $\K\L$ function in $(s,j)$).
    \end{proofof}

    \bigskip

    \begin{proofof}{Corollary~\ref{cor:dISS-convergence}}
        Since system \eqref{eq:gen_sys} is $\delta$ISS and $F(0,0) = 0$, by taking $\xi_2 = 0$, $w_2 = 0$ we conclude that \eqref{eq:gen_sys} is also (non-incrementally) ISS.
        Then, for every bounded input, in view of \cite[Theorem 1]{Jiang2001Input-to-state}, \eqref{eq:gen_sys} has an ISS Lyapunov function, i.e. there exist a function $V: \real^{n_x} \to \real$ and functions $\rho_1, \rho_2, \alpha, \sigma \in \K_\infty$ such that, for all $x \in \real^{n_x}$ and all $w \in \real^{n_w}$, it holds that
        \begin{align*}
            \rho_1(|x|) &\leq V(x) \leq \rho_2(|x|), \\
            V(F(x,w)) - V(x) &\leq - \alpha(V(x)) + \sigma(|w|).
        \end{align*}
        Let us denote by $M(w) < \infty$ the upper bound on $\|w\|_\infty$.
        Then, the set
        \begin{align*}
            \mathbb X := \{x \in \real^{n_x} : V(x) \leq \alpha^{-1} \circ \sigma(M(w)) \},
        \end{align*}
        is forward invariant for \eqref{eq:gen_sys}.
        %
        Due to the existence of a forward invariant set for any bounded $w$, continuity of $F$, and $\delta$UGAS of \eqref{eq:gen_sys} implied by $\delta$ISS, we can conclude, due to Proposition~\ref{prop:incremental-implies-contractive}, that system \eqref{eq:gen_sys} is uniformly globally convergent.
    \end{proofof}

    \bigskip

    \begin{proofof}{Lemma~\ref{lem:regularity-S}}
        Let us denote $f_{c,2} := f_c(x_c, e_2)$.
        \\
        \proofstep{Proof of (i) $\implies$ Ass.~\ref{ass:regularity-Se}:} Consider any point $\zeta \in \real^{n_c}$.
        Due to the definition of projection, we have that
        \begin{align}
            \label{eq:pure-translation-1}
            \left\| \zeta - \Pi^P_{\S(e)}(\zeta) \right\|_P \leq \left\| \zeta - z \right\|_P, \quad \mbox{for all } z \in \S(e).
        \end{align}
        Since $\S(e) = \S_0 + f_\S(e)$, there exists $y_0 \in \S_0$ such that $\Pi^P_{\S(e)}(\zeta) = y_0 + f_\S(e)$.
        Similarly, any $z \in \S(e)$ can be written as $z = z_0 + f_\S(e)$ for some $z_0 \in \S_0$.
        Substituting these equalities in \eqref{eq:pure-translation-1}, we get
        \begin{align*}
            \|\zeta - y_0 - f_\S(e)\|_P &\leq \|\zeta - z_0 - f_\S(e)\|_P, \quad \mbox{for all } z_0 \in \S_0.
        \end{align*}
        Rearranging the terms, we see that $y_0 = \Pi^P_{\S_0} (\zeta - f_\S(e))$, therefore
        \begin{align}
            \label{eq:pure-translation-2}
            \Pi^P_{\S(e)}(\zeta) = \Pi^P_{\S_0}(\zeta - f_\S(e)) + f_\S(e).
        \end{align}
        Moreover, the projection on a convex set is non-expansive with respect to $\|\cdot\|_P$ (see \cite[Proposition 4.8]{Bauschke2017}, replacing every scalar product $\langle \cdot, \cdot \rangle$ with $\langle \cdot, \cdot \rangle_P$).
        This implies, for any $e \in \real^{n_e}$ and any $z_1, z_2 \in \real^n_c$,
        \begin{align}
            \label{eq:non-expansive}
            \left| \Pi_{\S(e)}^P (z_1) - \Pi_{\S(e)}^P(z_2) \right|
            &\leq \sqrt{1/\lambda_m(P)} \left\|\Pi_{\S(e)}^P (z_1) - \Pi_{\S(e)}^P(z_2) \right\|_P \nonumber \\
            &\leq \sqrt{1/\lambda_m(P)} \left\| z_1 - z_2 \right\|_P \nonumber \\
            &\leq \sqrt{\lambda_M(P)/\lambda_m(P)} \left| z_1 - z_2 \right| \nonumber \\
            &= \sqrt{\kappa(P)} | z_1 - z_2 |.
        \end{align}
        Using \eqref{eq:pure-translation-2}-\eqref{eq:non-expansive}, the triangle inequality, and global Lipschitzness of $f_\S$, we conclude that, for all $e_1, e_2 \in \real^{n_e}$,
        \begin{align*}
            \left| \Pi^P_{\S(e_1)}(f_{c,2}) \right. &- \left. \Pi^P_{\S(e_2)}(f_{c,2}) \right| = \\
            &\hspace{-40pt}= \left| \Pi^P_{\S_0} (f_{c,2} - f_\S(e_1)) + f_\S(e_1) - \Pi^P_{\S_0} (f_{c,2} - f_\S(e_2)) - f_\S(e_2) \right| \\
            &\hspace{-40pt}\leq \left| \Pi^P_{\S_0} (f_{c,2} - f_\S(e_1)) - \Pi^P_{\S_0} (f_{c,2} - f_\S(e_2)) \right| + \left| f_\S(e_1) - f_\S(e_2) \right| \\
            &\hspace{-40pt}\leq \left( \sqrt{\kappa(P)} + 1 \right) \left| f_\S(e_1) - f_\S(e_2)\right| \leq L_\S \left( \sqrt{\kappa(P)} + 1 \right) \left| e_1 - e_2\right|,
        \end{align*}
        where $L_\S$ is the Lipschitz constant of $f_\S$.
        This proves the claim.
        \\
        \proofstep{Proof of (ii) $\implies$ Ass.~\ref{ass:regularity-Se}:} In this specific case, the projection $\Pi^P_{\S(e)}(f_{c,2})$ is the solution of the constrained QP
        \begin{align*}
            \Pi^P_{\S(e)}(f_{c,2}) = \arg\min_z &\|z\|_P \\
            \mbox{s.t. } & A z \leq A f_{c,2} + b(e).
        \end{align*}
        Recall that $f_{c,2}$ is a constant, so the constrained QP above is in the same form as \cite[eqn. (13)]{Bemporad02TheExplicit}, where the parameter $x(t)$ in \cite[eqn. (13)]{Bemporad02TheExplicit} corresponds to $b(e)$ in our case.
        Therefore, we know from \cite[Theorem 4]{Bemporad02TheExplicit} that the optimizer (i.e., the projection) is a continuous, piecewise-affine function of $b(e)$, namely $\Pi^P_{\S(e)}(x) = H_i b(e) + W_i$.
        Let us define $\bar H := \max_i \|H_i\|$.
        Then, we have that
        \begin{align*}
            \left| \Pi^P_{\S(e_1)}(f_{c,2}) - \Pi^P_{\S(e_2)}(f_{c,2}) \right| \leq \bar H |b(e_1) - b(e_2)| \leq \bar H L_b |e_1 - e_2|,
        \end{align*}
        where we denoted by $L_b$ the Lipschitz constant of $b$.
        \\
        \proofstep{Proof of (iii) $\implies$ Ass.~\ref{ass:regularity-Se}:}
        Let us denote $p_1 := \Pi^P_{\S(e_1)}(f_{c,2})$ and $p_2 := \Pi^P_{\S(e_2)}(f_{c,2})$.
        Due to the variational characterization of projections \cite[Theorem 3.14]{Bauschke2017}, we have, for all $s_1 \in \S(e_1)$,
        \begin{align*}
            \langle f_{c,2} - p_1, s_1 - p_1 \rangle_P \leq 0.
        \end{align*}
        Let $s_1$ be a point such that $|p_2 - s_1| \leq d_H(\S(e_1), \S(e_2))$.
        Then, using \eqref{eq:dH-Kinf}, we get
        \begin{align*}
                \langle f_{c,2} - p_1, p_2 - p_1 \rangle_P &= \langle f_{c,2} - p_1, p_2 - s_1 + s_1 - p_1 \rangle_P \\
                &\leq \langle f_{c,2} - p_1, p_2 - s_1 \rangle_P \\
                &\leq \lambda_M(P) \left| f_{c,2} - p_1 \right| \left| p_2 - s_1 \right| \\
                &\leq \lambda_M(P) |f_{c,2} - p_1| \alpha_H(|e_1 - e_2|).
        \end{align*}
        Lastly, using \eqref{eq:fc-unif-bounded}-\eqref{eq:Se-unif-bounded}, giving $|f_{c,2}| \leq M_c$ and $|p_1| \leq M_\S$, respectively, we conclude that
        \begin{align}
            \label{eq:unif-bounded-proof1}
            \langle f_{c,2} - p_1, p_2 - p_1 \rangle_P \leq \lambda_M(P) (M_c + M_\S) \alpha_H(|e_1 - e_2|)
        \end{align}
        With a symmetrical reasoning, we can conclude that
        \begin{align}
            \label{eq:unif-bounded-proof2}
            \langle f_{c,2} - p_2, p_1 - p_2 \rangle_P \leq \lambda_M(P) (M_c + M_\S) \alpha_H(|e_1 - e_2|).
        \end{align}
        Summing \eqref{eq:unif-bounded-proof1} and \eqref{eq:unif-bounded-proof2}, we get
        \begin{align*}
            | p_1 - p_2 |^2 &\leq 1/\lambda_m(P) \langle p_1 - p_2, p_1 - p_2 \rangle_P \\
            &\leq 2 \kappa(P) (M_c + M_\S) \alpha_H(|e_1 - e_2|).
        \end{align*}
        Taking the square root of both sides, we obtain \eqref{eq:LH}.
    \end{proofof}

    \bigskip

    \begin{proofof}{Lemma~\ref{lem:Lipschitz}}
        Let us denote, for brevity, $f_{c,i} := f_c(x_i, e_i), \; i \in \{1,2\}$.
        Using the triangle inequality, \eqref{eq:LH}, and \eqref{eq:non-expansive}, we have, for all $(x_1,e_1), (x_2,e_2) \in \real^{n_c} \times \real^{n_e}$,
        \begin{align*}
            \left| \Pi^P_{\S(e_1)} f_{c,1} \right. & \left.- \hspace{4pt} \Pi^P_{\S(e_2)} f_{c,2} \right| = \\
            &= \left| \Pi^P_{\S(e_1)} f_{c,1} - \Pi^P_{\S(e_1)} f_{c,2} + \Pi^P_{\S(e_1)} f_{c,2} - \Pi^P_{\S(e_2)} f_{c,2} \right| \\
            &\leq \left| \Pi^P_{\S(e_1)} f_{c,1} - \Pi^P_{\S(e_1)} f_{c,2} \right| + \left| \Pi^P_{\S(e_1)} f_{c,2} - \Pi^P_{\S(e_2)} f_{c,2} \right| \\
            &\leq \sqrt{\kappa(P)} |f_{c,1} - f_{c,2}| + \alpha_{\mathcal S}(|e_1 - e_2|).
        \end{align*}
        Due to the continuity of $f_c$, as per Assumption~\ref{ass:regularity}, the previous inequality proves continuity of $\Pi^P_{\S(e)}(f_c)$ and, combined with the continuity of $f_p$, of $f$ in \eqref{eq:CL-NL}.
    \end{proofof}

    \section{Proof of Theorem~\ref{thm:small-gain}}
    \label{app:proof-small-gain}
    In order to prove Theorem~\ref{thm:small-gain}, we will need the following lemma, which is an adaptation of \cite[Lemma A.1]{Jiang1994Small-gain}, where we use the $\max$ on the right-hand side, rather than the summation.
    Given an essentially bounded function $z: \real_{\geq 0} \to \real_{\geq 0}$, we will use the notation $\|z\|_{[t_1, t_2]} := \sup_{t \in [t_1, t_2]} |z(t)|$.
    \begin{lemma}
    \label{lem:technical}
        Let $\mu \in (0,1]$, $\beta \in \K\L$, and $\gamma \in \K_\infty$ be such that $\gamma(s) < s$ for all $s > 0$.
        Then, there exists $\hat \beta \in \K\L$ such that, for any $s, d \in \real_{\geq 0}$ and any non-negative function $z: \real_{\geq 0} \to \real_{\geq 0}$, defined and essentially bounded on $[0, \infty)$ and satisfying
        \begin{align}
            \label{eq:lemma-hp}
            z(t) \leq \max \left\{ \beta(s,t), \gamma(\|z\|_{[\mu t, \infty)}), d \right\}, \quad \mbox{for all } t \in \real_{\geq 0},
        \end{align}
        we have
        \begin{align}
            \label{eq:lemma-claim}
            z(t) \leq \max \left\{ \hat \beta(s,t), d \right\}, \quad \mbox{for all } t \in \real_{\geq 0}.
        \end{align}
    \end{lemma}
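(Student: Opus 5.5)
The plan follows the proof of \cite[Lemma A.1]{Jiang1994Small-gain}, with sums replaced by maxima. There are two steps: first a uniform bound on $\|z\|_{[0,\infty)}$, then an iteration over a geometric sequence of times that produces decay. Throughout, I use that $\gamma(r)<r$ for $r>0$ and that $\gamma$ is continuous and increasing. Together these give $\gamma^k(r_0)\to 0$ as $k\to\infty$ for every $r_0\ge 0$. Since $\gamma^k(r)\le\gamma^k(r_0)$ for $r\le r_0$, this convergence is uniform on compact sets of $r$.

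\emph{Step 1 (uniform bound).} Take the supremum over $t\ge 0$ in \eqref{eq:lemma-hp}. Because $\|z\|_{[\mu t,\infty)}\le \|z\|_{[0,\infty)}$ and $\beta(s,t)\le\beta(s,0)$, we get
\begin{align*}
\|z\|_{[0,\infty)}\le \max\{\beta(s,0),\gamma(\|z\|_{[0,\infty)}),d\}.
\end{align*}
The left-hand side is finite since $z$ is essentially bounded. If $\|z\|_{[0,\infty)} > \max\{\beta(s,0),d\}$, the maximum must be attained by the $\gamma$ term, giving $\|z\|_{[0,\infty)}\le\gamma(\|z\|_{[0,\infty)})<\|z\|_{[0,\infty)}$, a contradiction. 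Hence $\|z\|_{[0,\infty)}\le\max\{\beta(s,0),d\}$.

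\emph{Step 2 (iteration).} Choose times $t_0=0$ and $t_k:=(2/\mu)^k-1$ for $k\ge1$. These satisfy $\mu t_{k+1}\ge t_k$; this choice also handles $\mu=1$. For $t\ge t_{k+1}$ we then have $[\mu t,\infty)\subseteq[t_k,\infty)$. Taking the supremum of \eqref{eq:lemma-hp} over $t\ge t_{k+1}$ gives
\begin{align*}
\|z\|_{[t_{k+1},\infty)}\le\max\{\beta(s,t_{k+1}),\gamma(\|z\|_{[t_k,\infty)}),d\}.
\end{align*}
Induction starting from Step 1, using $\gamma(d)\le d$ and the fact that $\gamma$ commutes with $\max$, yields
\begin{align*}
\|z\|_{[t_k,\infty)}\le\max\Big\{\max_{0\le j\le k}\gamma^{k-j}(\beta(s,t_j)),\,d\Big\}.
\end{align*}
I then split the inner maximum at $j=\lfloor k/2\rfloor$:
\begin{itemize}
\item for $j<\lfloor k/2\rfloor$, the term is at most $\gamma^{\lceil k/2\rceil}(\beta(s,0))$;
\item for $j\ge\lfloor k/2\rfloor$, the term is at most $\beta(s,t_{\lfloor k/2\rfloor})$, since $\gamma^m(r)\le r$.
\end{itemize}
This gives $z(t)\le\max\{\tilde\beta(s,k),d\}$ for $t\in[t_k,t_{k+1})$, where
\begin{align*}
\tilde\beta(s,k):=\max\{\gamma^{\lceil k/2\rceil}(\beta(s,0)),\beta(s,t_{\lfloor k/2\rfloor})\}.
\end{align*}
For fixed $k$ this is of class $\K$ in $s$, and it is nonincreasing in $k$ with limit $0$ as $k\to\infty$.

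\emph{Step 3 (KL majorant).} Define $\bar\beta(s,t):=\tilde\beta(s,k(t))$, where $k(t)$ is the index with $t\in[t_k,t_{k+1})$, so roughly $k(t)\approx\log(1+t)/\log(2/\mu)$. This function is only piecewise constant in $t$, so I will upper bound it by a genuine $\K\L$ function $\hat\beta$. I would do this with the standard construction: interpolate linearly in $t$ between the values at $t_{k-1}$ and $t_k$ to obtain continuity and monotone decay, and add a vanishing term such as $s\,e^{-t}$ to guarantee strict increase in $s$.

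I expect Step 3 to be the main obstacle, though only a technical one. One must check that the interpolated function dominates $\bar\beta$ on each interval; shifting the index by one handles this. One must also check that it stays continuous in $s$, which holds because $\gamma^m\circ\beta(\cdot,0)$ and $\beta(\cdot,t_j)$ are of class $\K$. The rest of the argument is routine once the time sequence is chosen so that $\mu t_{k+1}\ge t_k$.
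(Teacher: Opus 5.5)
Your proof is correct. It uses the same Jiang--Teel--Praly iteration as the paper, but the bookkeeping differs in a way worth noting.

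\emph{The paper's route.} The paper fixes $r\ge s$ and builds an $r$-dependent time sequence with $\beta(r,t_n)\le\gamma^n(\beta(r,0))$, corrected to $\bar t_{n+1}=\max\{t_{n+1},\bar t_n/\mu\}$. With this choice the $\beta$-term is always dominated, so the induction collapses to the single bound $z(t)\le\max\{\gamma^n(\beta(r,0)),d\}$ for $t\ge\bar t_n$. That is only a uniform-attractivity statement ($z(t)\le\max\{\varepsilon,d\}$ for $t\ge T(r,\varepsilon)$). The paper then asserts, without construction, that a $\K\L$ function $\hat\beta$ exists.

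\emph{Your route.} You use a geometric sequence $t_k=(2/\mu)^k-1$ that depends only on $\mu$. You carry all the terms $\gamma^{k-j}(\beta(s,t_j))$ through the induction and split at $\lfloor k/2\rfloor$. This gives an explicit $\tilde\beta(s,k)$ that is of class $\K$ in $s$ and nonincreasing to zero in $k$. The $\K\L$ bound therefore comes out directly, with no appeal to the ``uniform stability plus uniform attractivity'' conversion.

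\emph{Further remarks.} Your Step~1 is exactly the argument needed for the base case $n=0$ of the paper's induction, which the paper states without proof. Since $\|\cdot\|_{[t_1,t_2]}$ is defined as a supremum, both arguments actually need $\sup_t z(t)<\infty$, not merely essential boundedness. Your Step~3 is more than the lemma requires: the paper's definition of $\K\L$ only asks for nonincreasing decay to zero in the second argument, not continuity. Hence your piecewise-constant $\bar\beta(s,t)=\tilde\beta(s,k(t))$ already qualifies. The index-shifted interpolation you describe is correct if one wants the standard continuous version.
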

    \begin{proof}
        We want to prove that, for each $r, \varepsilon > 0$, there exists a time $T(r,\varepsilon) > 0$ such that, if \eqref{eq:lemma-hp} holds with $s \leq r$, then
        \begin{align}
            \label{eq:lemma-claim-d0}
            z(t) \leq \max\{\varepsilon, d\}, \quad \mbox{for all } t \geq T(r,\varepsilon).
        \end{align}
        Indeed, if the signal $z(t)$ eventually converges to a region bounded by an arbitrarily small $\varepsilon$ (or the constant $d$), it can be globally upper-bounded by a new $\mathcal{KL}$ function, $\hat{\beta}(s,t)$, combined with $d$, i.e. \eqref{eq:lemma-claim}.
        To proceed, we define a sequence of times $\{t_n\}_{n \in \mathbb N}$ such that $t_0=0$ and, for $n \geq 1$, $t_n$ is the smallest real number such that
        \begin{align}
        \label{eq:sequence}
            \beta(r,t_n) \leq \gamma^{n}(\beta(r,0)).
        \end{align}
        This sequence tracks the time required for the $\mathcal{KL}$ function $\beta$ to decay below successive compositions of the $\mathcal{K}$ function $\gamma$.
        Note that such a sequence always exists, since $\beta \in \mathcal{KL}$ and $\gamma \in \mathcal{K}$.
        Then, we define an auxiliary sequence $\{ \bar t_n \}_{n \in \mathbb N}$ as follows:
        \begin{align}
            \label{eq:new-sequence}
            \begin{cases}
                \bar t_0 = 0, \\
                \bar t_{n+1} = \max \{t_{n+1}, \bar{t}_{n}/\mu \}.
            \end{cases}
        \end{align}
        
        We will now prove by induction that, if \eqref{eq:lemma-hp} holds with $s \leq r$, then, for each $n$,
        \begin{align}
            \label{eq:equivalent-bound}
            z(t) \leq \max\left\{\gamma^{n}(\beta(r,0)), d \right\}, \quad \mbox{for } t \geq \bar{t}_n.
        \end{align}
        This will show that the upper bound of $z(t)$ strictly decreases over successive time domains defined by $\overline{t}_n$. As $n$ increases, $\gamma^n$ shrinks the bound, eventually fulfilling the convergence requirement \eqref{eq:lemma-claim-d0} for a sufficiently large $n$. To proceed by induction,
        note that \eqref{eq:equivalent-bound} holds with $n=0$, since $\gamma^{0}:=\mathrm{Id}$.
        Now suppose that \eqref{eq:equivalent-bound} holds for some $n \geq 0$.
        Then, due to \eqref{eq:new-sequence}, $\bar t_{n+1} \geq t_{n+1}$, which, combined with \eqref{eq:sequence} and $s \leq r$, implies that, for all $t \geq \bar t_{n+1}$,
        \begin{align}
            \label{eq:lemma2-proof-step1}
            \beta(s,t) \leq \beta(r,t) \leq \beta(r,\bar t_{n+1}) \leq \gamma^{n+1} (\beta(r,0)).
        \end{align}
        Moreover, again due to \eqref{eq:new-sequence}, $\mu \bar t_{n+1} \geq \bar t_n$, which implies that, for all $t \geq \bar t_{n+1}$,
        \begin{align}
            \label{eq:lemma2-proof-step2}
            \begin{split}
                \gamma(\|z\|_{[\mu t, \infty)}) \leq \gamma(\|z\|_{[\mu \bar t_{n+1}, \infty)}) &\leq \gamma(\max\left\{\gamma^n(\beta(r,0), d\right\}) \\
                &= \max\left\{ \gamma^{n+1} (\beta(r,0)), \gamma(d) \right\} \\
                &< \max \left\{ \gamma^{n+1}(\beta(r,0)), d \right\},
            \end{split}
        \end{align}
        where the last inequality is due to $\gamma(s) < s$.
        Using \eqref{eq:lemma2-proof-step1}, \eqref{eq:lemma2-proof-step2} in \eqref{eq:lemma-hp} we get that, for $t \geq \bar t_{n+1}$,
        \begin{align*}
            z(t) &\leq \max \left\{ \gamma^{n+1}(\beta(r,0)), \gamma^{n+1}(\beta(r,0)), d, d \right\} \\
            &= \max \left\{ \gamma^{n+1}(\beta(r,0)), d \right\},
        \end{align*}
        which proves \eqref{eq:equivalent-bound}, and thus the existence of $\hat \beta \in \K\L$ such that \eqref{eq:lemma-hp} implies \eqref{eq:lemma-claim}, completing the proof.
    \end{proof}
    Using this lemma, we can proceed to prove the theorem.

    \medskip

    \begin{proofof}{Theorem~\ref{thm:small-gain}}
        We first prove uniform boundedness of $\delta e, \delta u$.
        Due to causality, $\delta$IOS is equivalent to
        \begin{subequations}
            \begin{align}
                \label{eq:causality-y1}
                |\delta e(j)| &\leq \max \left\{ \beta_1(|\delta \xi_p|, j), \gamma_1(\|\delta u\|_{[0,j]}), \gamma^w(\|\delta w\|_\infty) \right\}, \\
                \label{eq:causality-y2}
                |\delta u(j)| &\leq \max \left\{ \beta_2(|\delta \xi_c|, j), \gamma_2(\|\delta e\|_{[0,j]})\right\},
            \end{align}
        \end{subequations}
        where we used the interconnection law in Assumption 3. Note that $\|\delta e\|_{[0,j]}, \|\delta u\|_{[0,j]}$ are bounded, because there is no finite escape time in discrete-time dynamical systems.
        Moreover, since $\beta_1, \beta_2 \in \mathcal{KL}$,
        \begin{subequations}
            \begin{align}
                \label{eq:bounded-y1-step1}
                \|\delta e\|_{[0,j]} &\leq \max \left\{ \beta_1(|\delta \xi_p|, 0), \gamma_1(\|\delta u\|_{[0,j]}), \gamma^w(\|\delta w\|_\infty) \right\}, \\
                \label{eq:bounded-y2-step1}
                \|\delta u\|_{[0,j]} &\leq \max \left\{ \beta_2(|\delta \xi_c|, 0), \gamma_2(\|\delta e\|_{[0,j]}) \right\}.
        \end{align}
        \end{subequations}
        Using \eqref{eq:bounded-y2-step1} in \eqref{eq:bounded-y1-step1}, we get
        \begin{align*}
            \|\delta e\|_{[0,j]} \leq \max &\left\{ \beta_1(|\delta \xi_p|, 0), \gamma_1 \circ \beta_2(|\delta \xi_c|, 0), \right.\\
            &\hspace{50pt} \left. \gamma_1 \circ \gamma_2(\|\delta e\|_{[0,j]}), \gamma^w(\|\delta w\|_\infty) \right\}.
        \end{align*}
        With a symmetrical reasoning for $\delta u$, using the small-gain condition \eqref{eq:small-gain} and taking the limit for $j \to \infty$, we obtain the following uniform bound on the incremental outputs:
        \begin{align}
            \label{eq:unif-bound-y1}
            \|\delta e\|_\infty &\leq \max \left\{ \beta_1(|\delta \xi_p|, 0), \gamma_1 \circ \beta_2(|\delta \xi_c|, 0), \gamma^w(\|\delta w\|_\infty) \right\}, \nonumber \\
            &=: \max \left\{ \hat \alpha_1 \left( \left| \delta \xi \right| \right), \hat \gamma_1^w \left( \| \delta w\|_\infty \right) \right\}, \\
            \label{eq:unif-bound-y2}
            \|\delta u\|_\infty &\leq \max \left\{ \beta_2(|\delta \xi_c|, 0), \gamma_2 \circ \beta_1(|\delta \xi_p|, 0), \gamma_2 \circ \gamma^w(\|\delta w\|_\infty) \right\}, \nonumber \\
            &=: \max \left\{ \hat \alpha_2 \left( \left| \delta \xi \right| \right), \hat \gamma^w \left( \| \delta w\|_\infty \right) \right\},
        \end{align}
        for some $\hat \alpha_i, \hat \gamma_i^w \in \mathcal{K}_\infty$.

        Next, we compute a uniform $\delta$IOS bound for $\delta e$.
        Consider again \eqref{eq:dIOS-particular}, and note that, due to causality, time invariance, $\beta_1 \in \mathcal{KL}$, and the fact that $j - \floor{j/2} \geq \floor{j/2}$ for all $j \in \mathbb N$,
        \begin{align}
            \label{eq:KL-step1}
            |\delta e(j)| 
            \leq \max &\left\{ \beta_1 \left( \left| \delta x_p \left( \floor{j/2} \right) \right|, \floor{j/2} \right), \right. \\
            &\hspace{50pt} \left. \gamma_1 \left( \|\delta e\|_{[\floor{j/2},j]} \right), \gamma^w \left( \|\delta w\|_\infty \right) \right\}. \nonumber
        \end{align}
        Now, for every $k \in \{\floor{j/2}, \ldots, j \}$, time invariance, \eqref{eq:causality-y1} and $\beta_1 \in \mathcal{KL}$ give
        \begin{align*}
            |\delta u(k)| &\leq \max \left\{ \beta_2(|\delta x_c(\floor{j/4})|, k - \floor{j/4}), \gamma_2(\|\delta e\|_{[\floor{j/4},k]}) \right\} \\
            &\leq \max \left\{ \beta_1 \left(|\delta \xi_c|, \floor{j/4} \right), \gamma_2 \left(\|\delta e\|_{[\floor{j/4},\infty]} \right) \right\},
        \end{align*}
        which implies
        \begin{align*}
            \|\delta u\|_{[\floor{j/2}, j]} \leq \max \left\{ \beta_1 \left(|\delta \xi_c|, \floor{j/4} \right), \gamma_2 \left(\|\delta e\|_{[\floor{j/4},\infty]} \right) \right\}.
        \end{align*}
        Using this bound in \eqref{eq:KL-step1}, we get
        \begin{align}
            \label{eq:KL-step2}
            |\delta e(j)| \leq \max &\left\{ \beta_1 \left( \left| \delta x_p \left( \floor{j/2} \right) \right|, \floor{j/2} \right), \gamma_1 \circ \beta_2(|\delta \xi_c|, \floor{j/4}), \right. \nonumber \\
            &\hspace{20pt} \left. \gamma_1 \circ \gamma_2 \left( \|\delta e\|_{[\floor{j/4}, \infty]} \right), \gamma^w \left( \|\delta w\|_\infty \right) \right\},
        \end{align}
        Due to incremental detectability \eqref{eq:dDet} and uniform boundedness of $\delta e, \delta u$, proved in \eqref{eq:unif-bound-y1}-\eqref{eq:unif-bound-y2}, there exist functions $\bar \beta_2 \in \mathcal{KL}$ and $\bar \sigma_2 \in \mathcal{K}_\infty$ such that
        \begin{align}
            \label{eq:KL-step3}
            \beta_1 \left( \left| \delta x_p \left( \floor{j/2} \right) \right|, \floor{j/2} \right)
            &\leq \max \left\{ \bar \beta_1 \left( | \delta \xi |, j \right), \bar \sigma_1 \left( \|\delta w\|_\infty \right) \right\}.
        \end{align}
        Substituting \eqref{eq:KL-step3} in \eqref{eq:KL-step2}, we get that there exist $\hat \beta_1 \in \mathcal{KL}$ and $\hat \sigma_1 \in \mathcal K$ such that
        \begin{align}
            |\delta e(j)| \leq \max \left\{ \hat \beta_1 \left( | \delta \xi |, j \right), \gamma_1 \circ \gamma_2 \left( \|\delta e\|_{[\floor{j/4}, \infty]} \right), \hat \sigma_1 \left( \|\delta w\|_\infty \right) \right\}.
        \end{align}
        Finally, we can use the Lemma~\ref{lem:technical} and a ``zero-order-held version'' of $\delta e$ to prove a $\delta$IOS bound on $\delta e$.
        By a symmetrical procedure, an equivalent bound can be obtained for $\delta u$, proving that the closed loop is $\delta$IOS.

        Since incremental detectability of both the plant and controller implies incremental detectability of their closed-loop interconnection, we can use Proposition~\ref{prop:dISS-dIOS} to conclude that \eqref{eq:CL-NL} is $\delta$ISS.
    \end{proofof}

\end{document}